\newcommand{\uncolored}[0]{\ensuremath{Uncolored}}
\newcommand{\red}[0]{\ensuremath{Red}}
\newcommand{\blue}[0]{\ensuremath{Blue}}
\newcommand{\black}[0]{\ensuremath{Black}}
\newcommand{\white}[0]{\ensuremath{White}}
\newcommand{\btcl}[0]{\ruleset{Bounded 2-Player Constraint Logic}} 
\title{PSPACE-Complete Two-Color Placement Games}
\author{Kyle Burke, Bob Hearn}
\begin{document}

\maketitle

\begin{abstract}
  We show that three placement games, \ruleset{Col}, \ruleset{NoGo}, and \ruleset{Fjords}, are \cclass{PSPACE}-complete on planar graphs.  The hardness of \ruleset{Col} and \ruleset{Fjords} is shown via a reduction from \btcl and \ruleset{NoGo} is shown to be hard directly from \ruleset{Col}.
\end{abstract}

\section{Background}

\subsection{Combinatorial Game Theory}

\emph{Combinatorial Game Theory} is the study of games with:
\begin{itemize}
  \item Two players alternating turns,
  \item No randomness, and
  \item Perfect information for both players.
\end{itemize}

A \emph{ruleset} is a pair of functions that determines which moves each player can make from some position.  Most games in this paper use \emph{normal play} rules, meaning if a player can't make a move on their turn, they lose the game (i.e. the last player to move wins).

The two players are commonly known as \emph{Left} and \emph{Right}.  The rulesets discussed here include players assigned to different colors (\blue\ vs \red\ or \black\ vs \white).  We use the usual method of distinguishing between them: Left will play as \blue\ and \black; Right plays \red\ and \white.

For more information on combinatorial game theory, the reader is encouraged to look at \cite{WinningWays:2001} and \cite{LessonsInPlay:2007}.

\subsection{Algorithmic Combinatorial Game Theory}

\emph{Algorithmic Combinatorial Game Theory} is the application of algorithms to combinatorial games.  The difficulty of a ruleset is analyzed by studying the computational complexity of determining whether the current player has a winning strategy.  In this paper, we show that many games are \cclass{PSPACE}-complete, which means that no polynomial-time algorithm exists to determine the winnability of all positions unless such an algorithm exists for all \cclass{PSPACE} problems.  

Usually determining the winnability of a ruleset is considered as the computational problem of the same name.  We use that language here, e.g. saying \btcl\ is \cclass{PSPACE}-complete means that the associated winnability problem is \cclass{PSPACE}-complete.

All games considered in this paper exist in \cclass{PSPACE} due to the max height of the game tree being polynomial \cite{}.  Thus, by showing that any of these games are \cclass{PSPACE}-hard, we also show that they are \cclass{PSPACE}-complete.

For more on algorithmic combinatorial game theory, the reader is encouraged to reference \cite{AlgGameTheory_GONC3}.

\subsection{\btcl}

\btcl\ is a combinatorial ruleset played on a directed graph where each arc has three properties:
\begin{itemize}
  \item Color: which of the two players is allowed to flip it.
  \item Flipped: a boolean flag indicating whether it has already been flipped.  Each arc may be flipped only once.
  \item Weight: one of $\{1, 2\}$.\footnote{Warning: in \cite{DBLP:books/daglib/0023750}, these weights are denoted by blue vs. red edges.  These colors do not correspond to the identity of the player that may flip the arc.}
\end{itemize}
An orientation of the arc is legal if each vertex in the graph has at total weight of incoming edges of at least 2.  A move consists of a player choosing an arc, $(v, w)$, to flip where:
\begin{itemize}
  \item The arc is that player's color, and
  \item The arc has not yet been flipped, and
  \item Flipping the arc (meaning the graph with $(w, v)$ replacing $(v, w)$) results in a legal orientation.
\end{itemize}

The goal of \btcl\ is for Left to flip a goal edge.  If they can flip this edge, then they win the game.  Otherwise, Right wins.

\btcl\ is \cclass{PSPACE}-complete, even when:
\begin{itemize}
  \item The graph is planar, and
  \item Only six types of vertices exist in the graph.  
\end{itemize}
These six vertex types are: And, Or, Choice, Split, Variable, and Goal, named for the gadgets they represent in the proof of \btcl\ hardness \cite{DBLP:books/daglib/0023750}.  The following is a description of each of these vertices.  Diagrams for each may be found in \cite{DBLP:books/daglib/0023750}.

\begin{itemize}
  \item Variable: One of two edges (one of each color) may be flipped.  \black's edge corresponds to setting the variable to \emph{true}, \white\ sets it to \emph{false}.
  \item Goal: This is the \black\ edge that Left needs to flip to win the game.
  \item And: A vertex with two outward-oriented ``input'' edges and one inward-oriented ``output'' edge.  In order to flip the output edge, both input edges must first be flipped.
  \item Or: Another vertex with two inputs and one output, but here only one of the inputs must be flipped in order for the output to be flipped.
  \item Choice: One input edge which, when flipped to orient inwards, means one of two output edges may be flipped to orient outwards.
  \item Split: One input edge which, when flipped to orient inwards, allows both output edges to be flipped orienting outwards.
\end{itemize}

In order to use \btcl\ as the source problem for a proof of \cclass{PSPACE}-hardness, it is sufficient to show that gadgets that simulate each of the six \btcl\ vertex types.  We use this to reduce directly to \ruleset{Col} and \ruleset{Graph-Fjords} to show both are \cclass{PSPACE}-complete.  For more information about the structure of each of these gadgets, the interested reader may reference \cite{DBLP:books/daglib/0023750}.

\subsection{Placement Games}

\emph{Placement games} on graphs are combinatorial rulesets played on graphs that fulfill all of these requirements:
\begin{itemize}
  \item Vertices are either marked or unmarked,
  \item A move for a player consists of marking one of the vertices, and
  \item Marks may never be moved or removed.\cite{BrownCHMMNS}
\end{itemize}

Since marks may never be moved or removed, the maximum number of plays made during a game is always $n$, where $n$ is the number of vertices in the graph.  Thus, an algorithm to compute the winner of any placement game needs only a polynomial amount of space; all placement games are in \cclass{PSPACE}.

The next three sections describe the three placement games considered in this paper: \ruleset{Col}, \ruleset{Graph-NoGo}, and \ruleset{Graph-Fjords}.

\subsection{\ruleset{Col}}

\ruleset{Col} is a partisan placement game where Left and Right alternately paint vertices with their color (\blue\ and \red) with the restriction that two neighboring vertices may not have the same color.  Thus a single turn consists of painting an uncolored vertex not adjacent to another vertex of the player's color.  A more formal definition is given in \cref{def:col}.

\begin{definition}[\ruleset{Col}]
  \label{def:col}
  \ruleset{Col} is a ruleset played on any graph, $G = (V, E, c)$, where $c$ is a coloring of vertices $c: V \rightarrow \{\blue, \red, \uncolored\}$ such that $\forall (v, w) \in E: $ either $c(v) \neq c(w)$ or $c(v) = c(w) = \uncolored$.  An option for player $A$ is a graph $G' = (V, E, c')$ with $\exists x \in V:$
  \begin{itemize}
    \item $\forall v \neq x \in V: c'(v) = c(v)$, and
    \item $c(x) = \uncolored$, and
    \item $c'(x) = A$'s color.
  \end{itemize}
\end{definition}

\ruleset{Col} was devised in 19XX.  Its computational complexity has remained an open problem since the 1970's.  We show that \ruleset{Col} is \cclass{PSPACE}-complete, even for planar graphs, in \cref{sec:colIsHard}.

\subsection{\ruleset{Graph-NoGo}}

\ruleset{Graph-NoGo} is a partisan placement game where Left and Right alternately paint vertices using their color (\black\ and \white) with the restriction that each connected component of one color must be adjacent to an \uncolored\ vertex.  A more formal definition is given in \cref{def:noGo}.

\begin{definition}[\ruleset{Graph-NoGo}]
  \label{def:noGo}
  \ruleset{Graph-NoGo} is a ruleset played on any graph, $G = (V, E, c)$, where $c$ is a coloring of vertices $c: V \rightarrow \{\black, \white, \uncolored\}$ such that $\forall$ connected single-color component $C \subseteq V: \exists (v,w) \in E$ where:
  \begin{itemize}
    \item $v \in C$, and
    \item $w \notin C$, and
    \item $c(w) = \uncolored$.
  \end{itemize}
  An option for player $A$ is a graph $G' = (V, E, c')$, where $\exists x \in V:$
  \begin{itemize}
    \item $\forall v \neq x \in V: c'(v) = c(v)$, and
    \item $c(x) = \uncolored$, and
    \item $c'(x) = A$'s color.
  \end{itemize} 
  and the same above property holds true, but using the new coloring $c'$: $\forall$ connected single-color (as colored by $c'$) component $C' \subseteq V: \exists (v, w) \in E$ where:
  \begin{itemize}
    \item $v \in C'$, and
    \item $w \notin C'$, and
    \item $c'(w) = \uncolored$.
  \end{itemize}
\end{definition}

\ruleset{NoGo} is the well-known version of \ruleset{Graph-NoGo} played specifically on a grid-graph.  Our hardness proof applies only to the more general \ruleset{Graph-NoGo}; the computational complexity of the grid version remains an open problem.

\ruleset{NoGo} is itself a non-loopy \ruleset{Go} variant where capturing moves are not allowed. (In \ruleset{Go}, uncolored vertices adjacent to a connected component of one color are known as liberties.  \ruleset{Graph-NoGo} enforces that a liberty must always exist for each connected component.)  Resolving the computational complexity has been considered an open problem since 2011 when a tournament was played among combinatorial game theorists at the Banff International Research Station.  

\subsection{\ruleset{Graph-Fjords}}

\ruleset{Graph-Fjords} is a partisan placement game where Left and Right alternately paint vertices using their color (\black\ and \white) with the restriction that the newly-painted vertex must be adjacent to a vertex already painted that player's color.  Note that the initial position must contain both colored and uncolored vertices in order to have any options.

\begin{definition}[\ruleset{Graph-Fjords}]
  \label{def:fjords}
  \ruleset{Fjords} is a ruleset played on any graph, $G = (V, E, c)$, where $c$ is a coloring of vertices $c: V \rightarrow \{\black, \white, \uncolored\}$.  An option for player $A$ is a graph $G' = (V, E, c')$ where $\exists (x, y) \in E:$
  \begin{itemize}
    \item $\forall v \neq x \in V: c'(v) = c(v)$, and
    \item $c(x) = \uncolored$, and
    \item $c'(x) = A$'s color, and
    \item $c(y) = A$'s color.
  \end{itemize}
\end{definition}

\ruleset{Graph-Fjords} is the generalized version of \ruleset{Fjords}, which is played on a hexagonal grid with some edges and vertices removed.  In the published version of \ruleset{Fjords}, the initial configuration is generated through a randomized process, but once that is complete the remainder of the game (as described here) is strictly combinatorial.  This paper only solves the general case; the computational complexity of \ruleset{Fjords} remains an open problem.

\section{\ruleset{Col} is \cclass{PSPACE}-complete}
\label{sec:colIsHard}

Next we show that \ruleset{Col} is hard.

\subsection{Reduction from Constraint Logic}

\begin{theorem}[\ruleset{Col} is Hard]
  \label{theorem:colHardness}
  Determining whether the next player in \ruleset{Col} has a winning strategy is \cclass{PSPACE}-hard.
\end{theorem}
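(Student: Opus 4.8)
The plan is to reduce from \btcl, which the excerpt establishes is \cclass{PSPACE}-complete even on planar graphs using only the six vertex types (And, Or, Choice, Split, Variable, Goal). As noted in the excerpt, it suffices to build a \ruleset{Col} gadget simulating each of these six vertex types, together with an edge/wire gadget that transmits the ``flipped'' state between gadgets, and to verify that the gadgets compose correctly on a planar graph. The central translation idea will be to encode the boolean ``flipped / not-yet-flipped'' state of each \btcl\ arc by whether a particular designated vertex (or short chain of vertices) in the \ruleset{Col} instance has been colored. Because \ruleset{Col} forbids a player from coloring a vertex adjacent to one of their own color, the availability of a move at a given vertex is controlled by the colors of its neighbors; this adjacency-driven ``a move becomes available only once a neighbor is colored'' behavior is exactly the mechanism I would use to mimic the directional activation constraints of And/Or/Choice/Split.

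First I would fix the correspondence between the two players: \btcl's \black\ (Left) plays \blue\ in \ruleset{Col} and \white\ (Right) plays \red, matching the convention set up in the Background. Next I would design a \emph{wire} gadget: a path of uncolored vertices, pre-colored at appropriate points, such that ``the signal has propagated'' corresponds to a specific vertex having become colorable and then colored by the controlling player, and such that the signal can only move in the intended direction. Then, one at a time, I would construct the \textbf{Variable}, \textbf{And}, \textbf{Or}, \textbf{Choice}, \textbf{Split}, and \textbf{Goal} gadgets, each as a small fixed \ruleset{Col} subgraph with labeled input/output terminals that attach to wires. For each gadget I must argue a local correctness lemma: the output becomes colorable exactly when the \btcl\ activation condition on the inputs holds (both inputs for And, at least one for Or, the single input for Choice/Split, etc.), and no ``cheating'' move is ever available that would let a player activate an output prematurely or block a legitimate one. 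I would also need to handle \emph{tempo/parity}: \ruleset{Col} is a last-player-wins normal-play game, so I must ensure the gadgets contribute a controlled, neutral number of ``free'' moves (e.g., by padding with isolated colorable vertices or balanced independent components) so that the win/loss outcome of the \ruleset{Col} position tracks precisely whether Left can force the Goal edge to flip, rather than being decided by an incidental parity mismatch.

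After the gadgets are defined, the reduction itself is: given a planar \btcl\ instance, replace each vertex by its \ruleset{Col} gadget and each arc by a wire, producing a planar \ruleset{Col} graph of size polynomial in the input. The correctness proof then has two directions. I would show that any winning \btcl\ strategy for a player translates into a winning \ruleset{Col} strategy by having that player color the vertices dictated by their \btcl\ moves (and respond to opponent deviations using the local gadget lemmas), and conversely that any \ruleset{Col} strategy that deviates from simulating a legal \btcl\ play can be answered so that the deviator gains nothing. Planarity must be preserved throughout, so the gadgets should be drawn with their terminals on the outer face in a consistent cyclic order, and I would invoke a planar crossover or rely on the fact (used in constraint-logic reductions) that the source instance is already planar so no crossovers are needed.

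The hard part will be the gadget correctness together with tempo control, and these two concerns interact. In \ruleset{Col} the only interaction between vertices is the single ``no same-colored neighbor'' constraint, which is a fairly weak and symmetric primitive compared to the directed, weighted, once-only flipping semantics of \btcl; capturing \emph{directionality} (a signal that may activate an output but must not let the output's state leak back and re-enable an input) and \emph{one-shot} behavior (each arc flips at most once) within this symmetric local rule is the main obstacle. I expect the And and Choice gadgets to be the most delicate, since they require genuinely conjunctive behavior and a forced choice respectively, and I anticipate spending most of the effort proving that in these gadgets neither player has an unintended move that breaks the simulation. Ensuring that all these local gadgets simultaneously compose without introducing spurious moves at the wire junctions, and that the global move-parity faithfully records the \btcl\ outcome, is where the proof's real weight lies.
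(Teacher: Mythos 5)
Your starting point matches the paper's (reduce from planar \btcl\ by building \ruleset{Col} gadgets for the six vertex types), but your central mechanism relies on a property that \ruleset{Col} simply does not have. You propose to encode signal propagation as ``a move becomes available only once a neighbor is colored.'' In \ruleset{Col} this is impossible: a move for a given player at a vertex $x$ is legal iff $x$ is uncolored and no neighbor of $x$ already has that player's color. Since colors are never removed, coloring vertices can only \emph{destroy} options, never create them; each player's set of legal moves shrinks monotonically over the whole game. So no vertex ever ``becomes colorable,'' and activation-style wires --- and with them the directional, one-shot signal propagation that your And/Or/Choice/Split gadgets depend on --- cannot be realized as described. (That mechanism does exist in \ruleset{Graph-Fjords}, where a player must play adjacent to their own color, and the paper exploits it there; \ruleset{Col} is the opposite case.)

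The paper's proof works around exactly this obstruction by inverting the logic: everything is expressed through \emph{disabling} moves and counting them. Blue's participation in the gadget region is confined to an initial variable-selection phase; otherwise Blue plays on a private stockpile component (a star with $k$ rays whose hub is colored \red). Red must color one vertex of every edge gadget, choosing between an ``active'' and an ``inactive'' vertex, and Red's own earlier choices constrain the later ones through the no-same-color adjacency rule --- for instance, in the And gadget the output's active vertex is adjacent to both inputs' inactive vertices, so Red can activate the output only if Red activated both inputs. This leans on the assumption that the \btcl\ instance has the structure coming from the \ruleset{POS-CNF} reduction (variables feeding a monotone formula feeding a goal), so that after variable selection the rest of Red's play is forced constraint propagation. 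The game then becomes a race: Red gets exactly one extra move (the goal vertex) precisely when the selected assignment satisfies the formula, and $k$ is tuned so that this one move decides the normal-play winner. Your instinct that tempo/parity is where the weight lies is correct, but in \ruleset{Col} it is not a side condition to be patched with padding components; the entire reduction must be a move-counting argument, because ``activation'' is unavailable as a primitive.
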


  To complete this proof, we reduce from 2-player \ruleset{Constraint Logic} to \ruleset{Col}.  The resulting position's graph will have two separate components: a section only Left (\blue) can play on and a section where both players will play that consists of the gadgets presented here.
  
  The \blue-only component is a single star graph with $k$ rays (we'll specify $k$ later) and a single hub colored \red.
  
  This second section with the gadgets will be played in two stages.  In the first stage, the players first alternate choosing variable gadgets to play on, then \red\ finishes filling in the rest of the gadgets (\blue\ cannot play on the other gadgets) while \blue\ plays on their separate component.  When \red\ finishes correctly playing on the gadgets, they may be able to make a single last play on the goal gadget depending on whether they won the variable-selection phase.  If they incorrectly play on the gadgets, then that will cost them one or more moves and \blue\ will win.
  
  \begin{figure}[h!]
    \begin{center}
      \includegraphics[scale=.9]{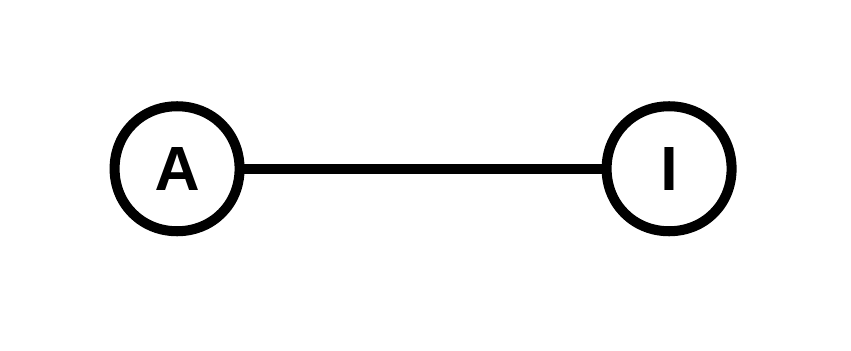}
	\end{center}
    \caption{The \ruleset{Col} gadget for a \ruleset{Constraint Logic}-Variable.  A stands for ``activated''; I stands for ``inactive''.}
    \label{fig:edge}
  \end{figure}
  
  The basic gadget represents the orientable \ruleset{Constraint Logic} edge (figure \ref{fig:edge}).  This is modeled by two connected vertices.  If Red colors the `A' or `I' vertex, this represents activating or leaving the edge inactive, respectively.  All of the other gadgets use these edges as either inputs or outputs (or both).  Red needs to play in each of the edges (and other places) in order to win the game.
  
  The remaining gadgets connect the edges to each other.  In order to suffice for \cclass{PSPACE} hardness, we need only to implement the gadgets to represent \ruleset{Constraint Logic} vertices that represent variables, goal-edges, splitters, path choice, and AND and OR gates.
  
  In order to complete the reduction, we need to show how to create \ruleset{Col} gadgets from each of the relevant \ruleset{Constraint Logic} gadgets: Variable, Output-Edge, Or, And, Choice, and Split.
  
  \begin{figure}[h!!]
    \begin{center}
      \includegraphics[scale=.9]{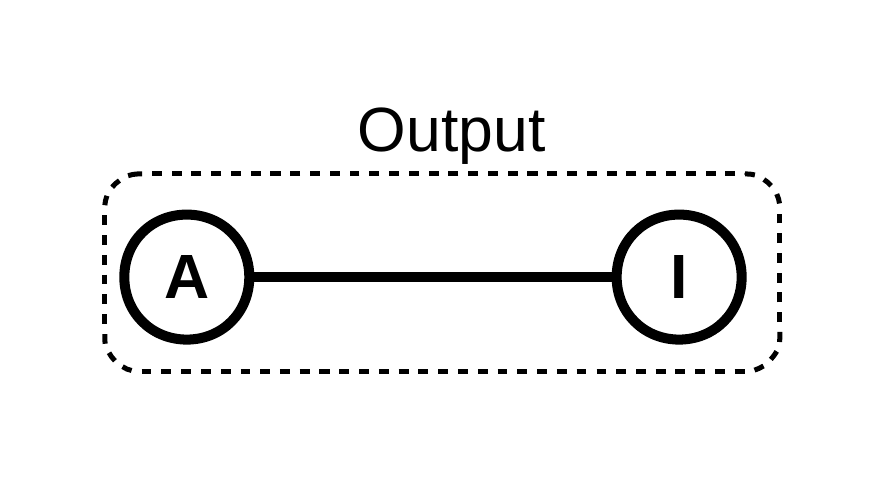}
	\end{center}
    \caption{The \ruleset{Col} gadget for a \ruleset{Constraint Logic}-Variable.}
    \label{fig:variable}
  \end{figure}
  
  
  \begin{figure}[h!]
    \begin{center}
      \includegraphics[scale=.8]{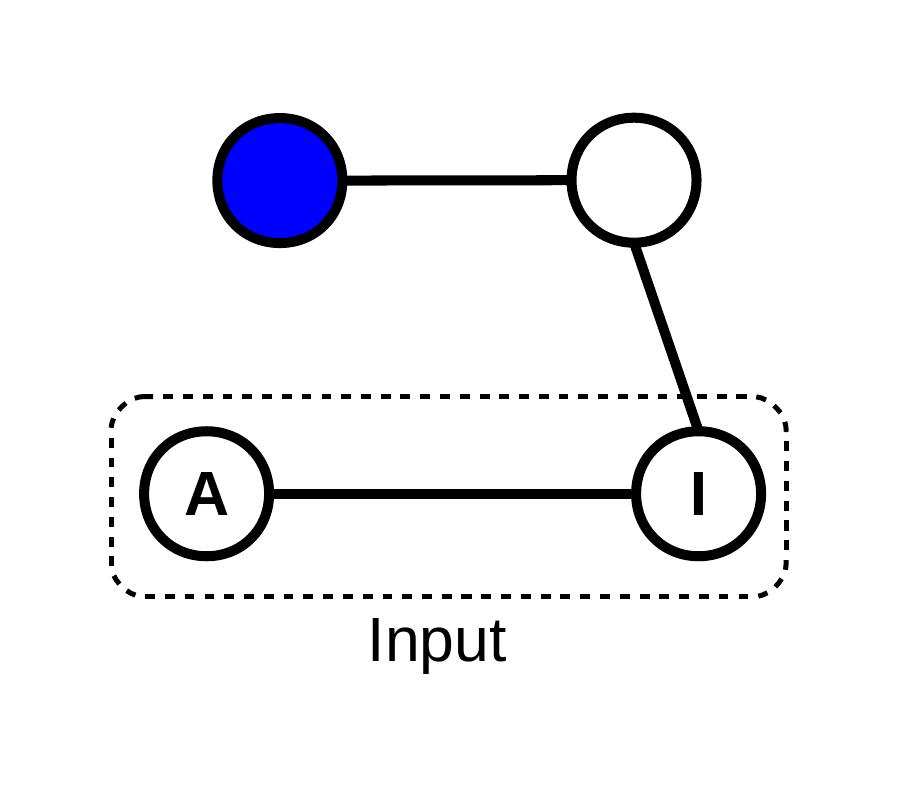}
	\end{center}
    \caption{The \ruleset{Col} gadget for a \ruleset{Constraint Logic}-Goal-Edge.  The only way to color the unmarked vertex is if the edge below is activated.}
    \label{fig:goalEdge}
  \end{figure}
  
  The Goal-Edge gadget includes a pair of adjacent vertices, which are connected to an edge pair.  (See figure \ref{fig:goalEdge}.)  One of the vertices is colored blue, while the other remains uncolored and is adjacent to the inactive vertex in the input variable.  This represents the edge that needs to be activated in order for \red\ to win the game.  
  
  \begin{figure}[h!]
    \begin{center}
      \includegraphics[scale=.5]{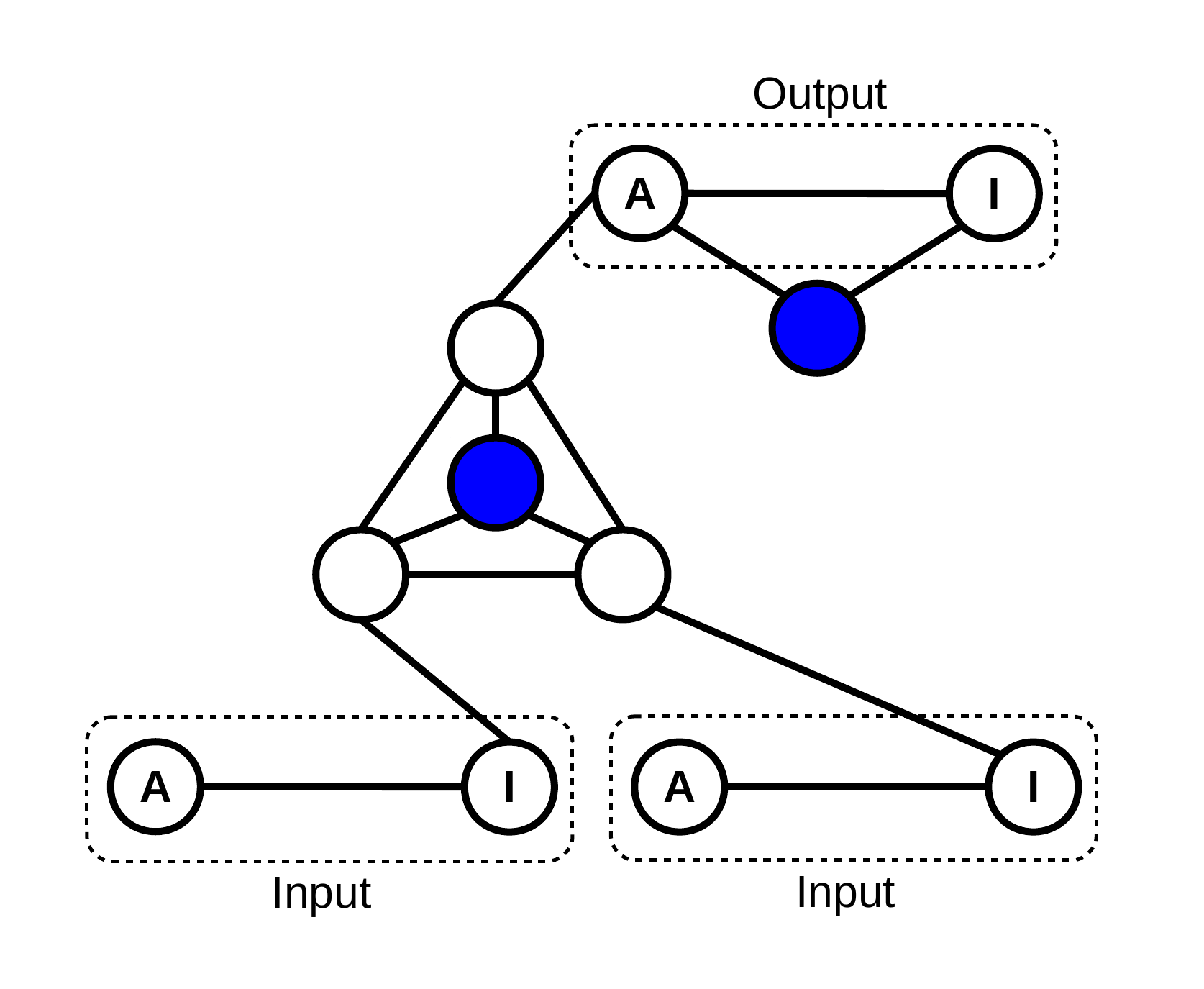}
	\end{center}
	\caption{The \ruleset{Col} gadget for a \ruleset{Constraint Logic}-Or.  If either of the inputs is activated, then the output can be activated and the inner structure can also be played.}
    \label{fig:or}
  \end{figure}
  
  The Or gadget, illustrated in figure \ref{fig:or}, connects three edge gadgets (two input and one output) to a 4-clique.  In the new clique, one vertex is colored blue, another is connected to the active vertex of the output variable, and the remaining two are connected, one each, to the inactive vertices of the inputs.  If either of the inputs is activated, then the output can be activated and the clique can also be played.  (\red\ will need to play in each of these inner cliques to have a chance to win.)  Otherwise, the clique can be played on only if the output variable is inactive.  
  
  \begin{figure}[h!]
    \begin{center}
      \includegraphics[scale=.5]{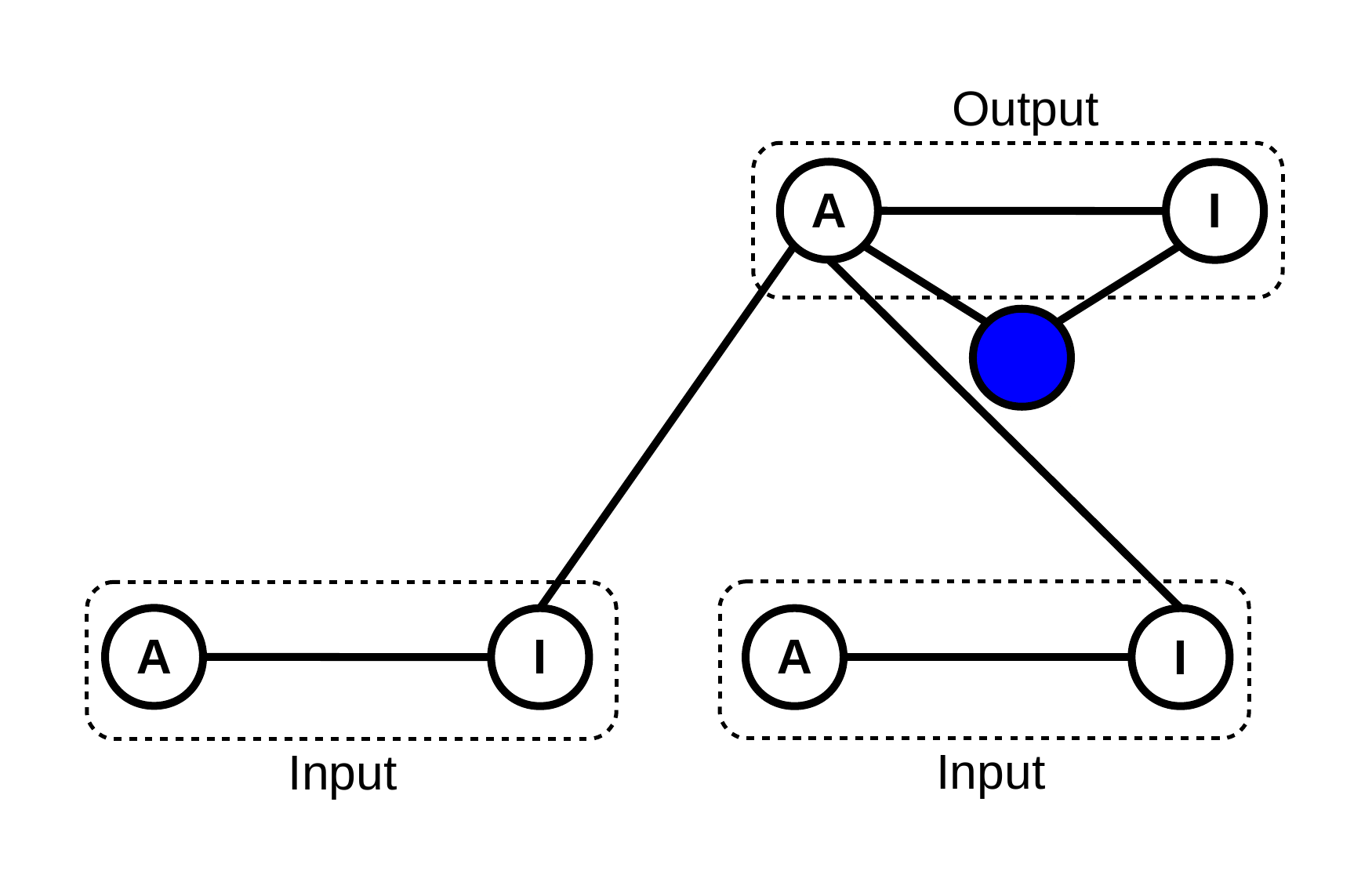}
	\end{center}
	\caption{The \ruleset{Col} gadget for a \ruleset{Constraint Logic}-And.}
    \label{fig:and}
  \end{figure}
  
  The And gadget, shown in figure \ref{fig:and}, also connects two input edge gadgets to an output.  The active vertex of the output is connected to each input inactive vertex.  In order to activate the output variable, both of the two input variables must be activated.
  
  \begin{figure}[h!]
    \begin{center}
      \includegraphics[scale=.6]{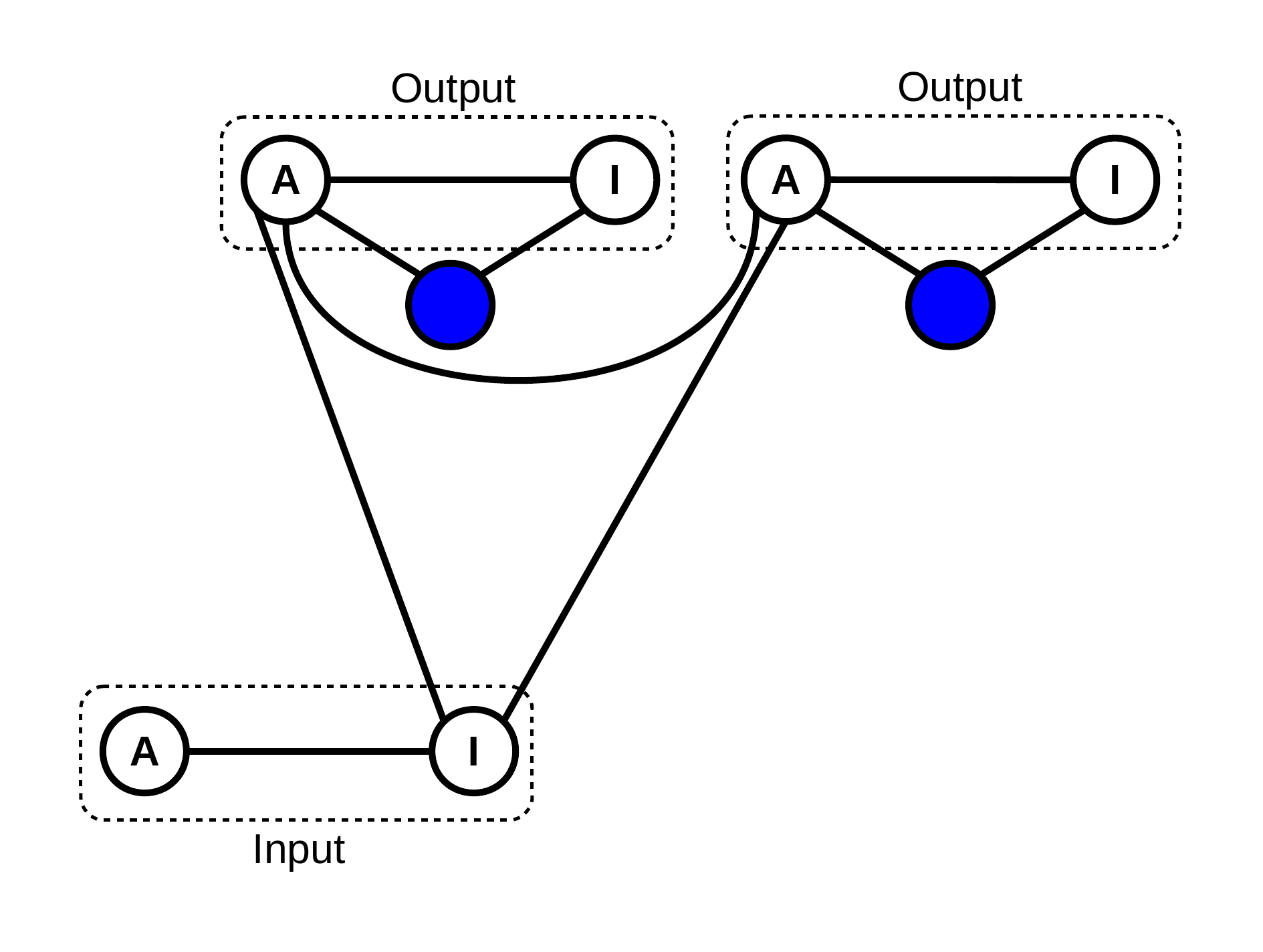}
	\end{center}
	\caption{The \ruleset{Col} gadget for a \ruleset{Constraint Logic}-Choice.}
    \label{fig:choice}
  \end{figure}
  
  The Choice gadget, figure \ref{fig:choice}, connects one input edge gadget to two outputs.  The active vertices of the outputs are adjacent to each other and both connected to the inactive vertex of the input.  If the input is chosen to be active, then one of the outputs may be activated.  (They cannot both be activated, because the active vertices are adjacent.)
  
  \begin{figure}[h!]
    \begin{center}
      \includegraphics[scale=.6]{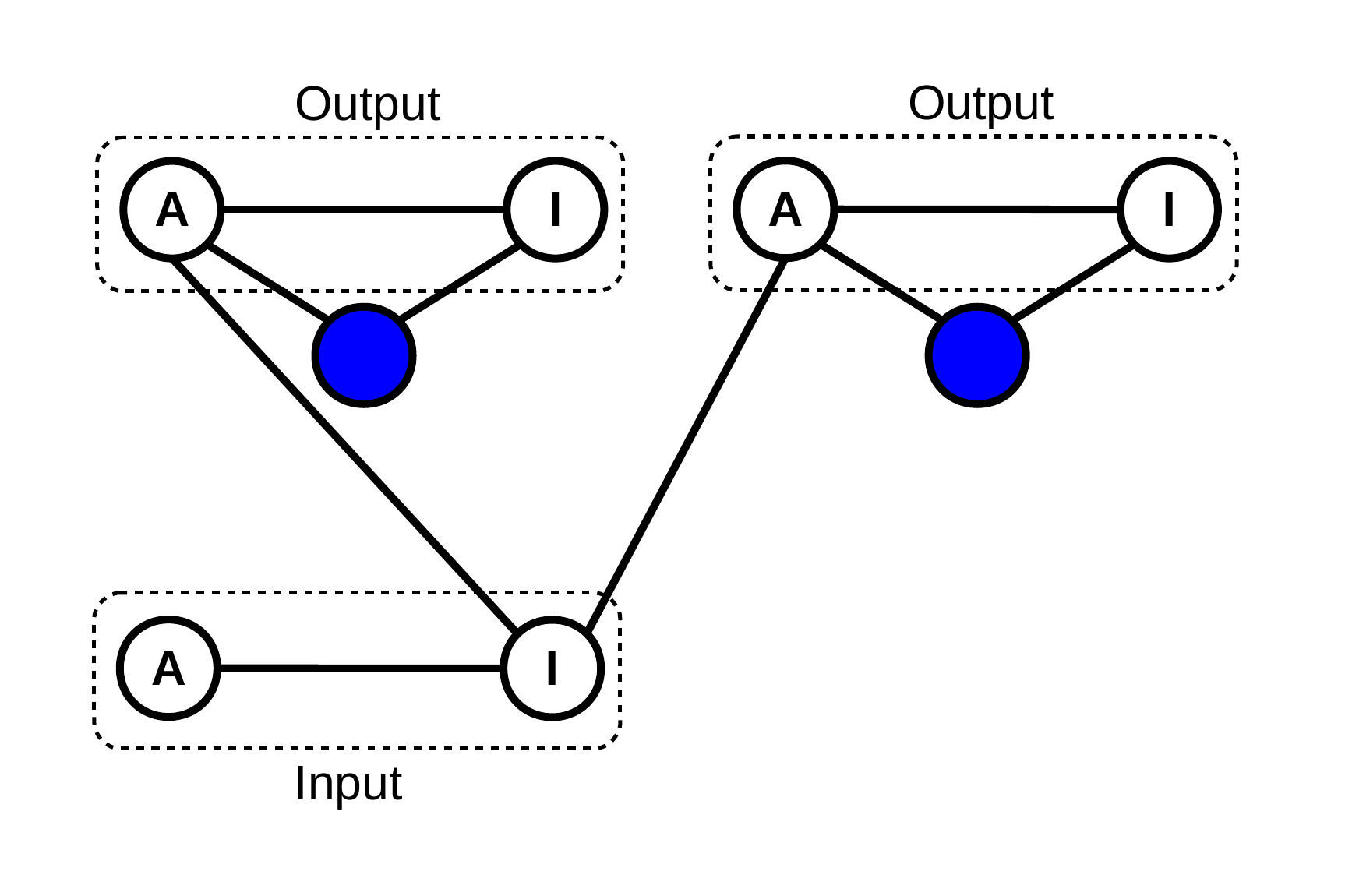}
	\end{center}
	\caption{The \ruleset{Col} gadget for a \ruleset{Constraint Logic}-Split.}
    \label{fig:split}
  \end{figure}
  
  The Split gadget, shown in figure \ref{fig:split}, is exactly the same as the Choice gadget, except that the active vertices of the outputs are not adjacent.  Thus, if the input variable is activated, \emph{both} outputs can be active.


\subsection{Planarity}

\begin{corollary}[Planar Hardness]
  Determining whether the next player in \ruleset{Col} on planar graphs has a winning strategy is \cclass{PSPACE}-hard.
\end{corollary}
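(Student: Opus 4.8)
The plan is to show that the reduction already constructed in the proof of \cref{theorem:colHardness} preserves planarity, so that feeding it a planar instance of \btcl\ produces a planar instance of \ruleset{Col}. Recall that \btcl\ is \cclass{PSPACE}-complete even when the underlying graph is planar, so I would take such a planar instance as the source of the reduction. Since the reduction is polynomial-time and correctness was already established in \cref{theorem:colHardness}, the only thing left to verify for this corollary is that the output graph can always be drawn in the plane without crossings.

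First I would check that each individual gadget --- the edge (\cref{fig:edge}), Variable (\cref{fig:variable}), Goal-Edge (\cref{fig:goalEdge}), Or (\cref{fig:or}), And (\cref{fig:and}), Choice (\cref{fig:choice}), and Split (\cref{fig:split}) --- is itself a planar graph. Each is a constant-size graph (at most a $4$-clique together with a few attached interface vertices), and each figure already exhibits a planar embedding, so this step is immediate. Second, I would argue that replacing every vertex of the planar \btcl\ graph by its corresponding \ruleset{Col} gadget, and every \btcl\ arc by an edge gadget joining the appropriate interface vertices, introduces no crossings: each gadget occupies a bounded disk whose interface (input/output) vertices sit on its boundary, so the gadgets can be laid out following a fixed planar embedding of the source graph, with each edge gadget routed along the corresponding arc of that embedding. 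The \blue-only star component is disjoint from the gadget section, and a disjoint union of planar graphs is planar, so it contributes nothing to the difficulty.

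The step I expect to require the most care is matching the cyclic (rotation-system) order of the interface vertices around each gadget to the order dictated by the planar embedding of the source \btcl\ instance; if some gadget admitted only one cyclic arrangement of its inputs and outputs, and that arrangement disagreed with the embedding, a naive substitution could force a crossing. I would resolve this by observing that each gadget may be freely reflected and that its interface vertices, being attached through independent edge gadgets, can be permuted into any required cyclic order without altering the gadget's internal \ruleset{Col} behavior. Hence the substitution respects the chosen planar embedding, the resulting position is planar, and combining this with \cref{theorem:colHardness} establishes \cclass{PSPACE}-hardness of \ruleset{Col} on planar graphs.
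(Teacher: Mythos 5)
Your proposal takes the same route as the paper: invoke the planar \cclass{PSPACE}-hardness of \btcl\ and observe that the reduction from \cref{theorem:colHardness} preserves planarity. The paper's proof is just a two-sentence assertion of exactly this; your additional work (checking each gadget is planar, handling the cyclic order of interface vertices via reflection) fills in details the paper leaves implicit, and is correct.
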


\begin{proof}
  We can follow the same proof as in Theorem \ref{theorem:colHardness}.  Since \ruleset{constraint logic} is \cclass{PSPACE}-hard on planar graphs, and the reduction preserves planarity, planar \ruleset{Col} is also \cclass{PSPACE}-hard.
\end{proof}

\section{\ruleset{Graph-NoGo} is \cclass{PSPACE}-complete}

In this section, we prove that \ruleset{Graph-NoGo} is \cclass{PSPACE}-complete, even on planar graphs.  Since \ruleset{Graph-NoGo} is a graph placement game, it is already in \cclass{PSPACE}.  It remains to show that \ruleset{Graph-NoGo} is hard.

\subsection{\ruleset{Graph-NoGo} is \cclass{PSPACE}-hard}

To prove the hardness of \ruleset{Graph-NoGo}, we reduce from \ruleset{Col}, which was shown to be \cclass{PSPACE}-hard in \cref{theorem:colHardness}.  The reduction uses only two gadgets.

\begin{theorem}[\ruleset{Graph-NoGo} is \cclass{PSPACE}-hard]
  It is \cclass{PSPACE}-hard to determine whether the next player has a winning strategy in \ruleset{Graph-NoGo}.
\end{theorem}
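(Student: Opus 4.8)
The plan is to give a local, gadget-based reduction that turns a \ruleset{Col} position $G = (V, E, c)$ into a \ruleset{Graph-NoGo} position whose game tree is isomorphic to that of $G$. I identify Left's color \blue\ with \black\ and Right's \red\ with \white, so that a \ruleset{Col} move (coloring an uncolored vertex) corresponds to the identically-colored \ruleset{Graph-NoGo} move on a distinguished ``main'' vertex. The reduction uses exactly two gadgets, one per vertex and one per edge of $G$, and I keep each main vertex $v \in V$ as a single initially-uncolored vertex of the constructed graph (pre-colored exactly when $c(v) \neq \uncolored$).

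First I would build the \emph{vertex (liberty) gadget}. To each main vertex $v$ I attach a private vertex $\ell_v$ adjacent only to $v$, together with a pre-colored \black\ stone $b_v$ and a pre-colored \white\ stone $w_v$, where $b_v$ and $w_v$ are each adjacent only to $\ell_v$. Since $\ell_v$ is the unique liberty of both $b_v$ and $w_v$, coloring $\ell_v$ either color immediately kills one of them, so $\ell_v$ is never a legal move and stays \uncolored\ for the whole game. Thus $\ell_v$ is a permanent liberty of $v$, which guarantees that $v$'s own single-vertex group can never die. This mirrors the fact that in \ruleset{Col} a vertex may be colored even when it is completely surrounded by the opponent's color, the crucial feature that a naive ``same graph'' translation would get wrong.

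Next I would build the \emph{edge gadget}. For each edge $(u,v) \in E$ I add two pre-colored helper stones, a \white\ stone $s_{uv}$ and a \black\ stone $s'_{uv}$, each adjacent to exactly $u$ and $v$. Then $s_{uv}$ has liberties exactly $\{u, v\}$, so once one endpoint is colored \black\ the only remaining liberty of $s_{uv}$ is the other endpoint; coloring that endpoint \black\ as well removes $s_{uv}$'s last liberty and is therefore illegal, whereas coloring it \white\ is legal, since it merges into a \white\ group that still has a private liberty. Symmetrically, $s'_{uv}$ forbids coloring both $u$ and $v$ \white\ while permitting the mixed case. This is exactly the \ruleset{Col} restriction that adjacent vertices may not share a color, implemented in a color-specific fashion.

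The key step, and the main obstacle, is a case analysis establishing that the move correspondence is exact: for every partial coloring of the main vertices, coloring $v$ a given color is legal in \ruleset{Graph-NoGo} if and only if it is legal in \ruleset{Col}. The delicate part is tracking the merges that occur when $u$ or $v$ takes the same color as an adjacent helper (for instance, coloring $u$ \black\ fuses $u$ with $s'_{uv}$, so $s'_{uv}$ stops constraining the \black\ case), and confirming that a blocked move always fails by killing a helper rather than by leaving the placed stone's own group alive through $\ell_v$. Granting this analysis, the only \uncolored\ vertices of the constructed graph are the main vertices (each $\ell_v$ being unplayable), so legal plays are in bijection with \ruleset{Col} moves and the two game trees are isomorphic; since both rulesets use normal play, the next player wins one exactly when they win the other. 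Finally, the construction is polynomial and purely local, replacing each edge by two length-two connections joining $u$ and $v$ and each vertex by a pendant gadget, so it preserves planarity. Combined with \cref{theorem:colHardness}, this shows \ruleset{Graph-NoGo} is \cclass{PSPACE}-hard, even on planar graphs.
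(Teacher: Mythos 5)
Your proposal is correct and is essentially the paper's own reduction: your vertex gadget ($\ell_v$ with pendant \black\ and \white\ stones making it a permanently unplayable liberty) is exactly the paper's $v_F, v_B, v_W$ construction, and your pair of helper stones per edge, each adjacent to both endpoints, is exactly the paper's $e_B, e_W$ edge gadget, yielding the same move-for-move correspondence with \ruleset{Col} and the same planarity-preservation remark.
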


\begin{proof}
  Let $G = (V, E, c)$ be an instance of \ruleset{Col}.  The result of the reduction on $G$ will be a \ruleset{Graph-NoGo} graph $G' = (V', E', c')$ defined as follows.
  
  \begin{figure}[h!]
    \begin{center}\includegraphics[scale = .6]{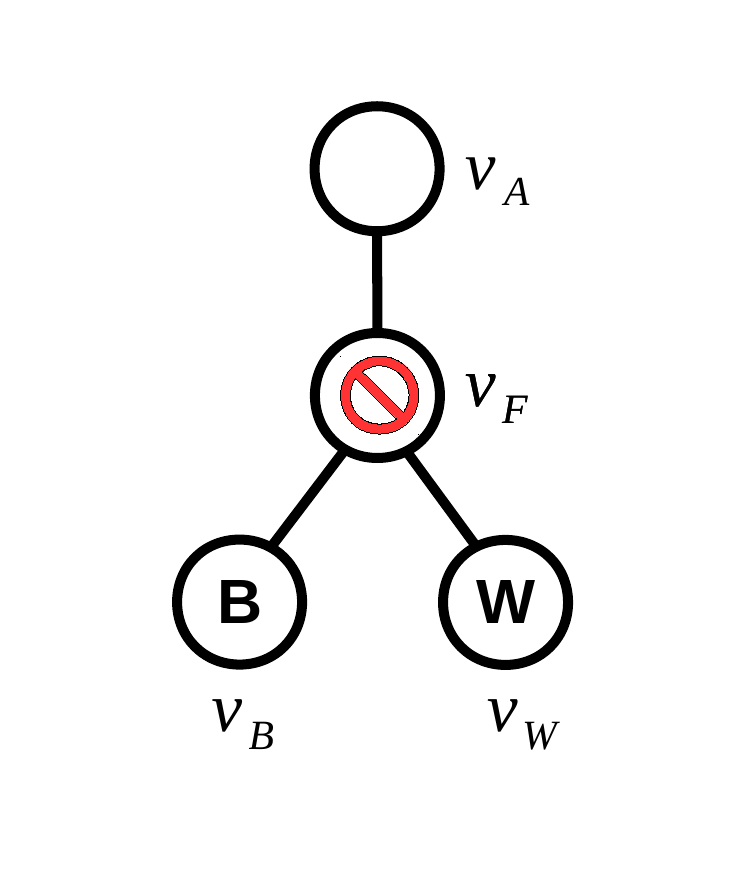}\end{center}
    \caption{The \ruleset{Graph-NoGo} gadget for each \ruleset{Col} vertex $v$.  $v_F$ cannot be colored and the component connected to $v_A$ will always have a liberty.}
    \label{fig:nogoVertex}
  \end{figure}
  
  For each vertex, $v$, in $V$, we include four vertices in $V'$: $v_B$, a vertex painted \black; $v_W$, a vertex painted \white; $v_F$, an \uncolored ``forbidden'' vertex that can never be painted; and $v_A$, a vertex that may be painted.  (Choosing to color $v_A$ corresponds to coloring \ruleset{Col} vertex $v$.)  We also include the three edges $(v_W, v_F), (v_B, v_F)$, and $(v_F, v_A)$ in $E'$.  This gadget is illustrated in \cref{fig:nogoVertex}.  $v_B$ and $v_W$ will never be adjacent to any vertices aside from $v_F$.  If $v_F$ were to be colored, either $v_B$ or $v_W$ would be a connected component without a liberty.  Thus, $v_F$ can never be colored, and will always be a liberty for $v_A$.

\begin{figure}[h!]
\begin{center}\includegraphics[scale = .6]{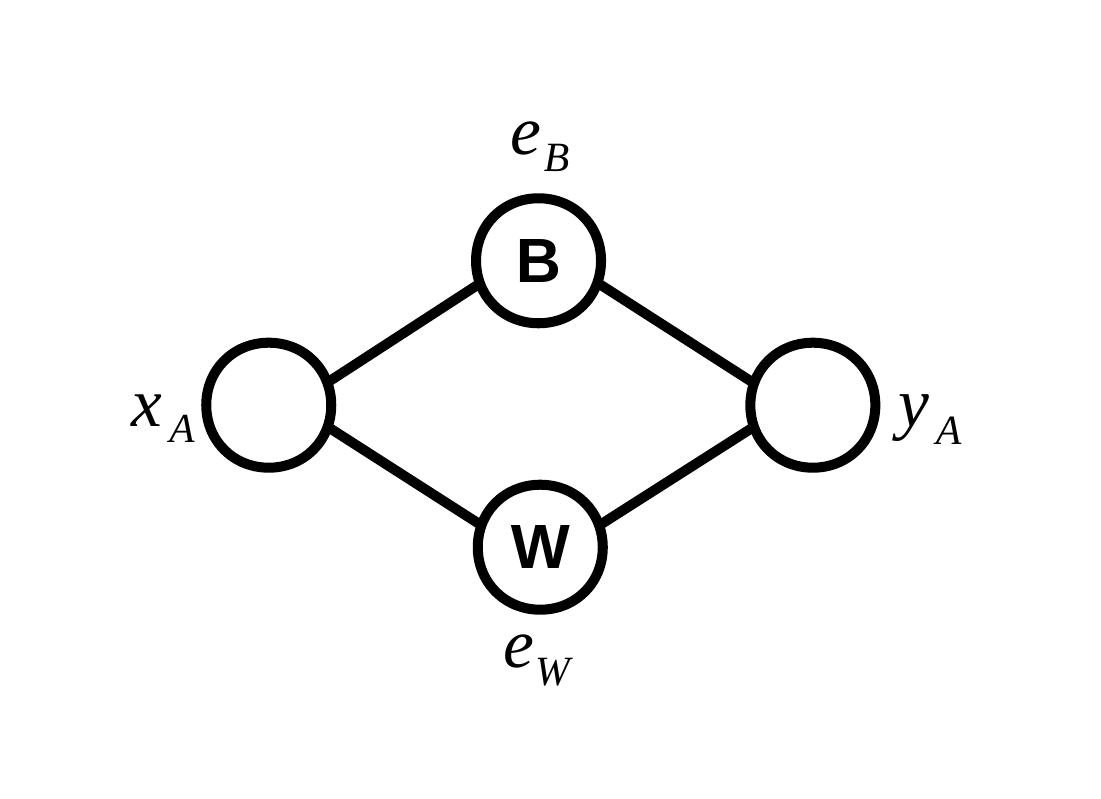}\end{center}
\caption{The \ruleset{Graph-NoGo} gadget for each \ruleset{Col} edge $e = (x, y)$.  $x_A$ and $y_A$ cannot be painted the same color because one of the two other edges.}
\label{fig:nogoEdge}
\end{figure}

  Also, for each edge, $e = (x, y)$, in $E$, we include two vertices and four edges in $G'$ to mimic the proper coloring rule in \ruleset{Col}.  This gadget is illustrated in \cref{fig:nogoEdge} and described more formally as follows.  Let $e_B$ and $e_W$ be vertices colored \black and \white, respectively, then include the four edges $(x_A, e_B)$, $(x_A, e_W)$, $(y_A, e_B)$, and $(y_A, e_W)$.  $e_B$ and $e_W$ will not be adjacent to any other vertices aside from $x_A$ and $y_A$.  If $x_A$ and $y_A$ have the same color, exactly one of the two new vertices $e_B$ or $e_W$ will be a connected component with no liberty.  Thus, they cannot both be painted the same color.

  Using the definitions above, we can formally define $V'$, $E'$, and $c'$.  First, $V'$:
  \begin{itemize}
    \item $V'_B = \{v_B \vdots v \in V \} \union \{e_B \vdots e \in E\}$,
    \item $V'_W = \{v_W \vdots v \in V \} \union \{e_W \vdots e \in E\}$,
    \item $V'_F = \{v_F \vdots v \in V \}$,
    \item $V'_A = \{v_A \vdots v \in V \}$, so
    \item $V' = V'_B \union V'_W \union V'_F \union V'_A$.
  \end{itemize}
  
  And $E'$:
  \begin{itemize}
    \item $E'_V = \Union_{v \in V} \{(v_B, v_F), (v_W, v_F), (v_F, v_A)\}$,
    \item $E'_E = \Union_{e = (x, y) \in E} \{(x_A, e_B), (x_A, e_W), (y_A, e_B), (y_A, e_W)\}$, so
    \item $E' = E'_V \union E'_E$.
  \end{itemize}
  
  And $c'$:
  \begin{itemize}
    \item $\forall v \in V'_B: c'(v) = \black$,
    \item $\forall v \in V'_W: c'(v) = \white$,
    \item $\forall v \in V'_F: c'(v) = \uncolored$,
    \item $\forall v \in V'_A: $ 
    \begin{itemize}
      \item if $c(v) = \uncolored$, then $c'(v) = \uncolored$
      \item if $c(v) = \blue$, then $c'(v) = \black$, and
      \item if $c(v) = \red$, then $c'(v) = \white$. 
    \end{itemize}
  \end{itemize}
  
  Each possible \ruleset{Graph-NoGo} move on $v_A \in V'$ for player $A$ cooresponds exactly to a \ruleset{Col} move on $v \in V$ for the same player, so the game trees and strategies for each are exactly the same.  Thus, \ruleset{Graph-NoGo} is also \cclass{PSPACE}-hard.
\end{proof}

\begin{corollary}[\ruleset{Graph-NoGo} is \cclass{PSPACE}-complete.]
  Determining whether the next player has a winning strategy in \ruleset{Graph-NoGo} is \cclass{PSPACE}-complete.
\end{corollary}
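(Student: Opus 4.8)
The plan is to combine two facts: membership in \cclass{PSPACE} and \cclass{PSPACE}-hardness. The hardness half is already done---it is exactly the content of the theorem immediately preceding this corollary---so the only remaining work is to certify that \ruleset{Graph-NoGo} lies in \cclass{PSPACE}.

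For membership, I would appeal to the general observation made in the Placement Games subsection. First I would verify that \ruleset{Graph-NoGo} meets the three defining conditions of a placement game: each vertex is either marked (painted \black\ or \white) or unmarked (\uncolored); a move consists of marking a single \uncolored\ vertex; and once painted, a vertex is never recolored or cleared. This last point is immediate from \cref{def:noGo}, whose option definition changes the color only of a vertex $x$ with $c(x) = \uncolored$ and leaves every other vertex fixed. Having confirmed this, the bound from the subsection applies directly: since each move permanently marks one of the $n$ vertices, every sequence of plays has length at most $n$, so a depth-first traversal of the game tree needs only polynomial space to store the current position and the current path. Hence \ruleset{Graph-NoGo} is in \cclass{PSPACE}.

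Combining the two halves yields the result: \ruleset{Graph-NoGo} is both in \cclass{PSPACE} and \cclass{PSPACE}-hard, and is therefore \cclass{PSPACE}-complete. I do not anticipate any genuine obstacle, since both ingredients are already in hand; the only point worth a moment's care is that the liberty constraint in \cref{def:noGo} (every single-color component must remain adjacent to an \uncolored\ vertex) restricts \emph{which} vertices may be marked on a given turn but never forces an already-placed mark to move. Thus the monotone ``marks are permanent'' property is preserved, and the polynomial length bound---and with it the \cclass{PSPACE} membership argument---goes through unchanged.
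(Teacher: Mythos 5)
Your proposal is correct and follows essentially the same route as the paper: invoke the fact that \ruleset{Graph-NoGo} is a placement game (hence in \cclass{PSPACE}) and combine it with the preceding hardness theorem. The extra care you take in verifying the placement-game conditions from \cref{def:noGo} is a welcome elaboration, but it does not change the argument.
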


\begin{proof}
  Since \ruleset{Graph-NoGo} is a placement game, it is in \cclass{PSPACE}.  Since it is both in \cclass{PSPACE} and \cclass{PSPACE}-hard, it is \cclass{PSPACE}-complete.
\end{proof}

\begin{corollary}[\ruleset{Planar-NoGo} is \cclass{PSPACE}-complete]
  Determining whether the next player has a winning strategy in \ruleset{Graph-NoGo} is \cclass{PSPACE}-complete when played on planar graphs.
\end{corollary}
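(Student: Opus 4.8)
The plan is to reuse the reduction from the preceding theorem and simply verify that it preserves planarity, then combine this with membership in \cclass{PSPACE}. Since \ruleset{Graph-NoGo} is a placement game it already lies in \cclass{PSPACE}, so it suffices to establish \cclass{PSPACE}-hardness on planar inputs. By the Planar Hardness corollary, \ruleset{Col} is \cclass{PSPACE}-hard even on planar graphs, so I would start the Col-to-NoGo reduction from a \emph{planar} \ruleset{Col} instance $G = (V, E, c)$ and show that the output $G' = (V', E', c')$ of that reduction is planar whenever $G$ is.

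The heart of the argument is that both gadgets are individually planar and attach only locally. The vertex gadget for $v$ consists of $v_A, v_F, v_B, v_W$ with the three edges $(v_F, v_A), (v_B, v_F), (v_W, v_F)$; this is a star centered at $v_F$ (a $K_{1,3}$) hanging off $v_A$, i.e.\ a pendant tree whose only connection to the rest of $G'$ is through the single terminal $v_A$. The edge gadget for $e = (x, y)$ consists of $e_B, e_W$ with the four edges $(x_A, e_B), (e_B, y_A), (y_A, e_W), (e_W, x_A)$, which form a $4$-cycle $x_A\,e_B\,y_A\,e_W$ on the two terminals $x_A$ and $y_A$ — in effect a ``thickened'' copy of the original edge $(x, y)$ that still joins $x_A$ to $y_A$.

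Given these observations I would build a planar embedding of $G'$ directly from a fixed planar embedding of $G$. Place $v_A$ at the location of $v$ for each $v \in V$. For each edge $e = (x, y)$, route the $4$-cycle of its gadget through a thin tube along the curve that drew $(x, y)$, placing $e_B$ and $e_W$ on opposite sides of that curve; then the two edges leaving $x_A$ into this tube (and likewise the two leaving $y_A$) occur consecutively in the rotation at $x_A$ (resp.\ $y_A$), so no crossings are introduced. Finally, for each $v$, draw the pendant tree $v_F, v_B, v_W$ inside a small disk in any face incident to $v_A$ (if $v$ is isolated in $G$ the gadget is planar trivially). Because each gadget occupies only a local region — a tube replacing an edge, or a disk in a face corner at a vertex — these regions are pairwise disjoint and the combined drawing is planar.

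The step I expect to require the most care is this global embedding argument: I must confirm that the edge-gadget tubes and the vertex-gadget pendant trees can be drawn simultaneously without collisions, which amounts to spelling out the rotation system at each $v_A$ so that the two tube-edges of each incident edge gadget are adjacent and one face corner is left free for the pendant tree. This is the standard fact that replacing each edge by a planar two-terminal gadget and each vertex by a planar single-terminal gadget preserves planarity, but I would make the bookkeeping explicit rather than invoke it. Once planarity is verified, correctness is inherited verbatim from the previous theorem — the game trees of $G$ and $G'$ coincide move-for-move — so planar \ruleset{Graph-NoGo} is \cclass{PSPACE}-hard; combined with membership in \cclass{PSPACE}, it is \cclass{PSPACE}-complete.
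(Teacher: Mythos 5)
Your proposal is correct and follows the same route as the paper: start from a planar \ruleset{Col} instance (using the planar hardness corollary for \ruleset{Col}), observe that the vertex gadget is a pendant tree attached only at $v_A$ and the edge gadget is a planar two-terminal replacement of each original edge, so the reduction preserves planarity, and then combine hardness with \cclass{PSPACE} membership. The paper simply asserts that ``the reduction gadgets do not require any additional crossing edges''; your explicit embedding via tubes and face disks is the same argument with the bookkeeping spelled out.
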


\begin{proof}
  Since \ruleset{Col} is \cclass{PSPACE}-complete on planar graphs, and the reduction gadgets do not require any additional crossing edges, the resulting \ruleset{Graph-NoGo} board is also planar.  Thus, \ruleset{Graph-NoGo} is also \cclass{PSPACE}-complete on planar graphs.
\end{proof}

\section{\ruleset{Graph-Fjords} is \cclass{PSPACE}-complete}

\subsection{\ruleset{Graph-Fjords} is \cclass{PSPACE}-hard}

For \ruleset{Graph-Fjords}, we will also reduce from \btcl: for the reduction we must cover variable, split, choice, and, and or gadgets, as well as the final victory gadget.  Assuming the \btcl\ position is the result of a reduction from \ruleset{POS-CNF}, thus players should choose the variables first, then play on the rest of the gadgets.

\subsubsection{\ruleset{Graph-Fjords} variables}

In order to enforce this, we'll add incentives to the variable gadgets so that they are all played before the other gadgets.  Each gadget for a variable $x_i$ will consist of a single \ruleset{Graph-Fjords} vertex, $v_i$, connected to both a black and white vertex as well as a separate clique of size $t$, $K_t$, as shown in \cref{fig:fjordsVariable}.  For further gadgets that use $x_i$, $v_i$ will be included as the active input. 

\begin{figure}[h!]
    \begin{center}\includegraphics[scale = .6]{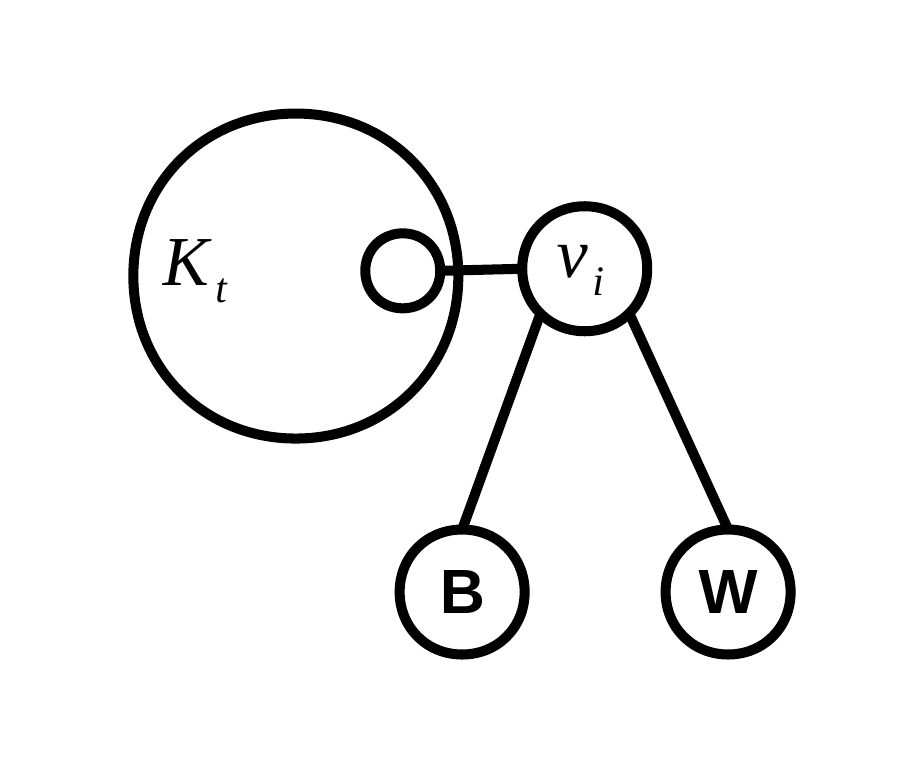}\end{center}
    \caption{The \ruleset{Graph-Fjords} variable gadget.  The left figure shows the $k$-clique; the right figure shows our replacement symbol used throughout the figures.}
    \label{fig:fjordsVariable}
\end{figure}

Other black and white-colored vertices exist in the remainder of the Fjords graph, but $t$ will be large enough so that all variables will be chosen first.

After all $n$ variables are chosen, we want each player to have cached $\ceiling{\frac{n}{2}t}$ ``moves'' in unclaimed vertices accessible only to them.  In order to make this happen, if $n$ is odd, we add an extra "dummy" variable gadget so that this balances out.

\subsubsection{\ruleset{Graph-Fjords} Input/Output Pairs}

As with the \ruleset{Col} reduction in \ref{sec:colIsHard}, there are Input/Output vertex pairs (with an active vertex and inactive vertex) that are outputs to some gadgets and inputs to others. As in \cref{fig:fjordsIO}, all figures here will have the active vertex on the left an the inactive on the right.

\begin{figure}[h!]
    \begin{center}\includegraphics[scale = .6]{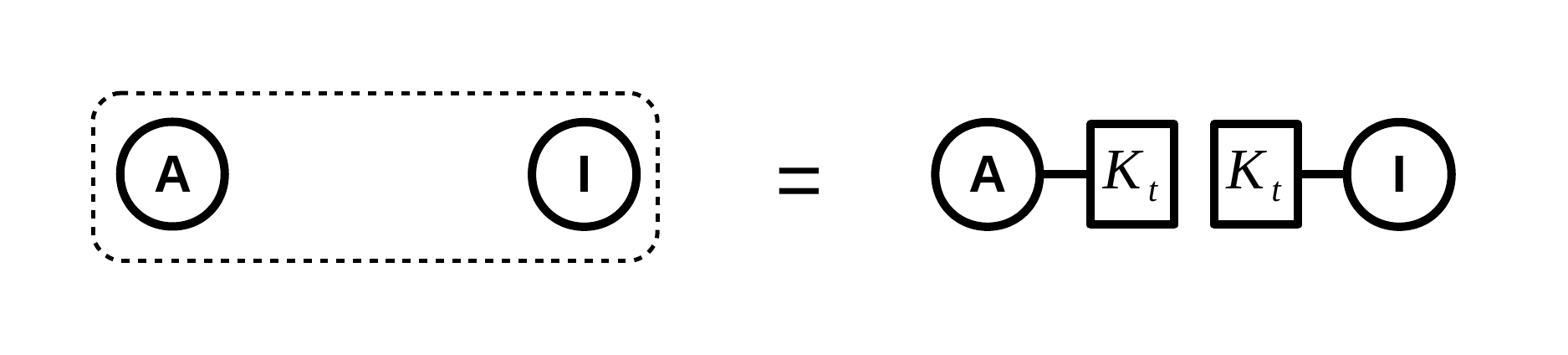}\end{center}
    \caption{The \ruleset{Graph-Fjords} input-output vertex pair.  The pair is active if A is colored \black.}
    \label{fig:fjordsIO}
\end{figure}

Unlike the \ruleset{Col} reductions, both players play in all pairs (aside from variables, which don't have an inactive vertex).  The pair is active or inactive depending on where Left plays.  If the active vertex is \black (and the inactive is \white), then the pair is active.  Otherwise, the pair is inactive.

\subsubsection{\ruleset{Graph-Fjords} Goal}

At the other end of the Fjords reduction is a single goal vertex, shown in \cref{fig:fjordsGoal}.  If the source \btcl\ is winnable by Left, then in the \ruleset{Graph-Fjords} position, Left will be able to move last by claiming this vertex because they activated the final input-output pair.

\begin{figure}[h!]
    \begin{center}\includegraphics[scale = .6]{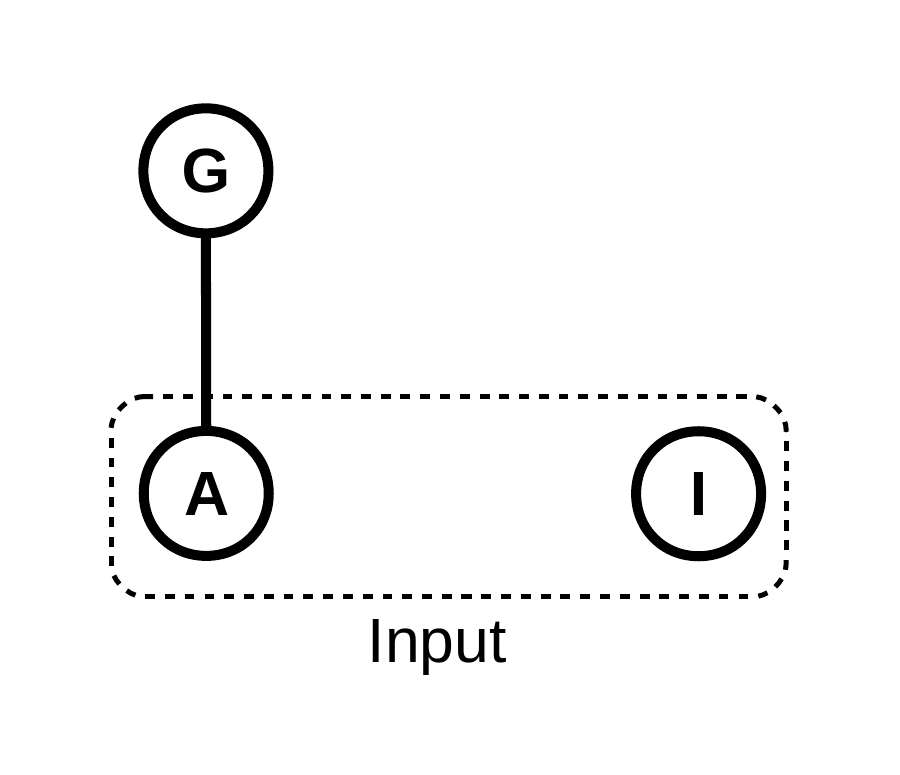}\end{center}
    \caption{The \ruleset{Graph-Fjords} goal gadget.}
    \label{fig:fjordsGoal}
\end{figure}

\subsubsection{\ruleset{Graph-Fjords} Intermediate Gadgets}

Between the variables and the goal, a series of gadgets will be laid out to implement the formula.  We enforce order on these by including decreasing incentives.  For gadget $X$ with an output that is an input to gadget $Y$, $X$ will have a higher incentive, $t$.  Just as with the variable gadgets, this will be realized as cliques $K_t$ attached to the gadget.

Since the source formula from \ruleset{POS CNF} has no negations, it is always better for Left to choose to make an output active over inactive.  (The same is true for Right.)

add more here?

In order to complete the reduction, we must include gadgets for the choice, split, or, and and \btcl gadgets.  

\subsubsection{\ruleset{Graph-Fjords} Choice Gadget}

The choice gadget allows Left to choose between one of two outputs, but only if the output is active.  (See \cref{fig:fjordsChoice} for an example.)  Since Left goes first, if the input is active, they will be able to choose between the active outputs.  Right will then respond by claiming the other active output.  Left and Right will then trade turns choosing the remaining inactive inputs. 

\begin{figure}[h!]
    \begin{center}\includegraphics[scale = .6]{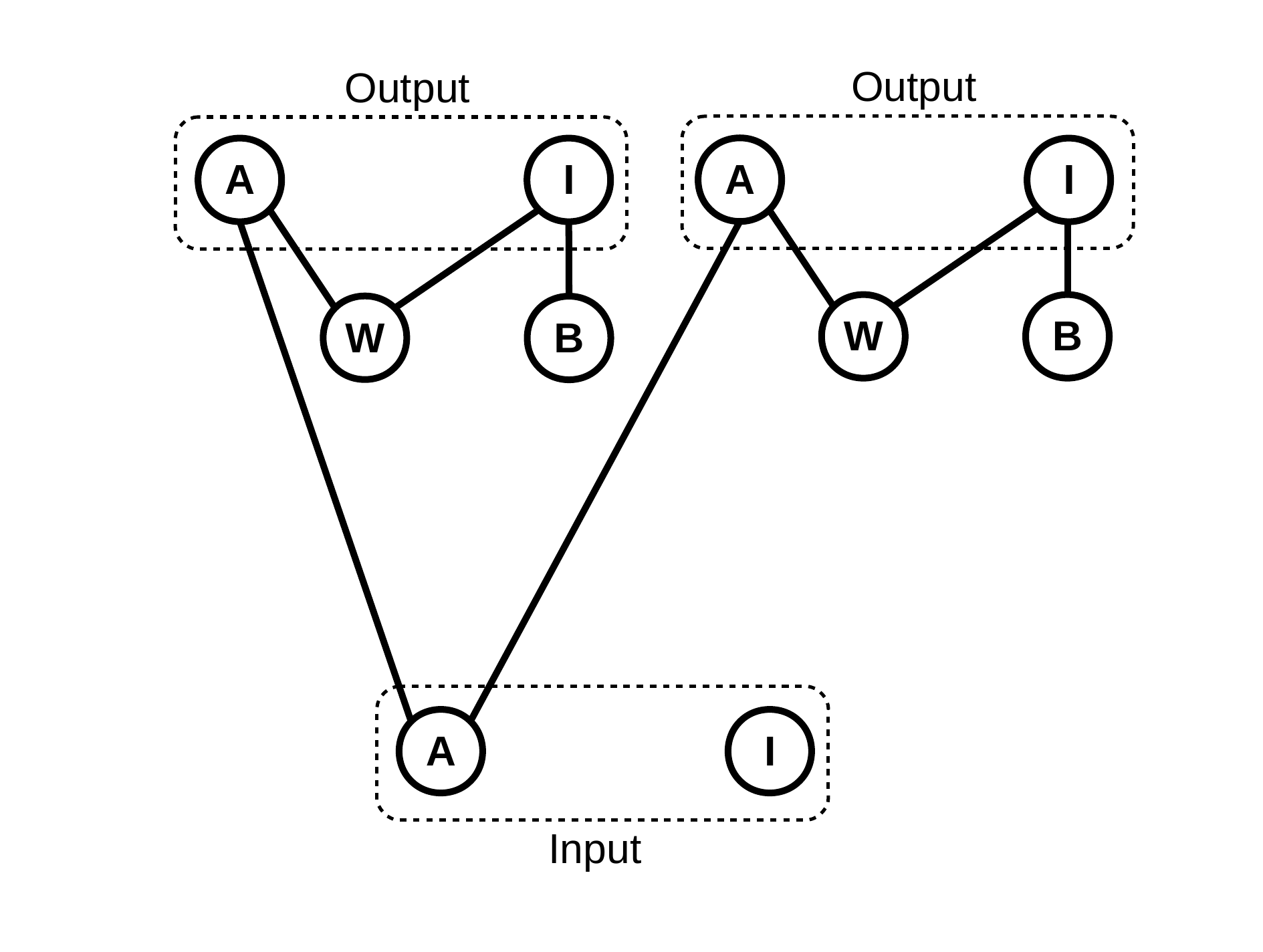}\end{center}
    \caption{The \ruleset{Graph-Fjords} choice gadget.}
    \label{fig:fjordsChoice}
\end{figure}

\subsubsection{\ruleset{Graph-Fjords} Split Gadget}

The split gadget copies the input pair into two output pairs and is shown in \cref{fig:fjordsSplit}.  In this gadget, whichever color is on the active input will get to choose both inactive outputs.

\begin{figure}[h!]
    \begin{center}\includegraphics[scale = .6]{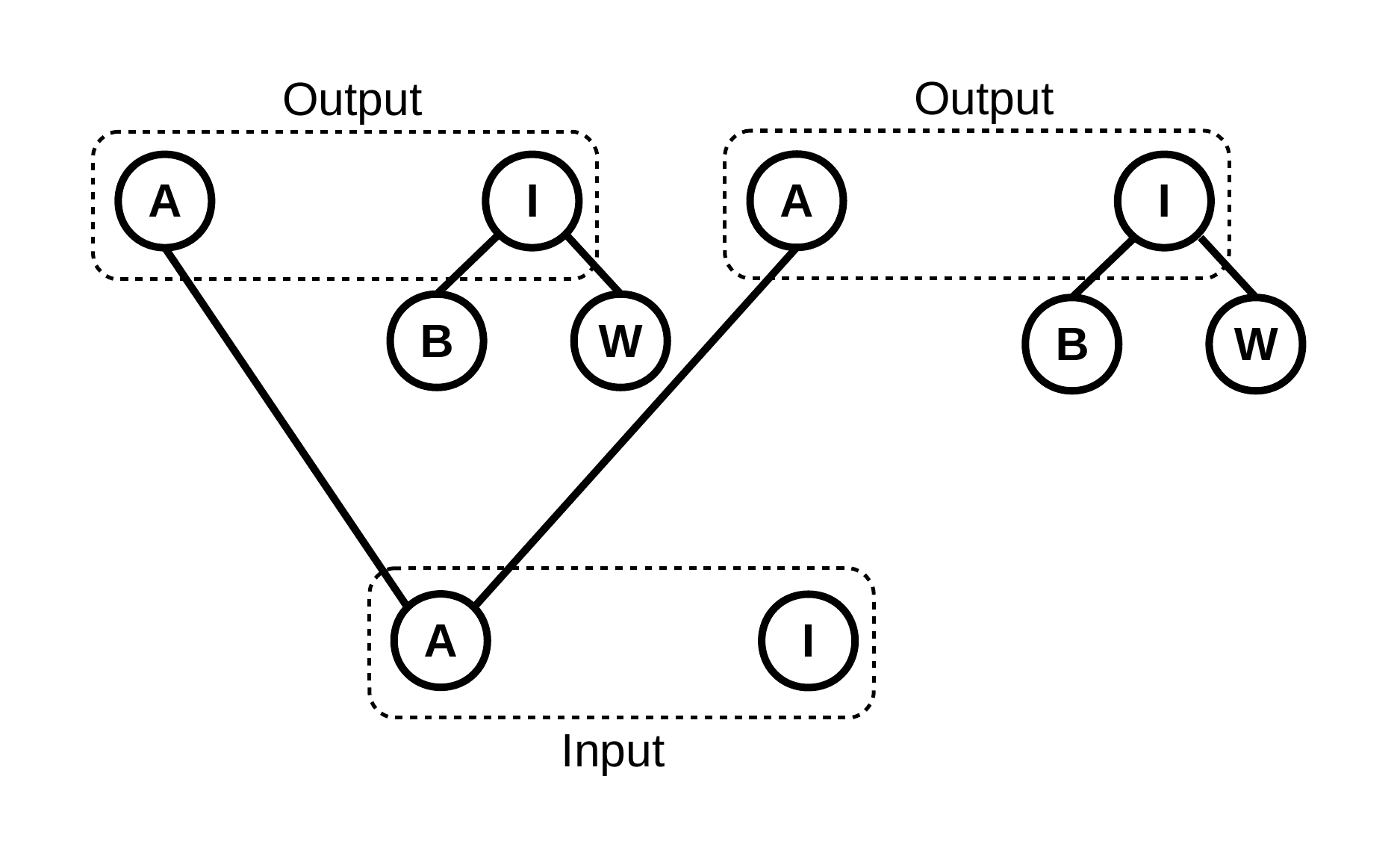}\end{center}
    \caption{The \ruleset{Graph-Fjords} split gadget.}
    \label{fig:fjordsSplit}
\end{figure}

\subsubsection{\ruleset{Graph-Fjords} Or Gadget}

The or gadget, shown in \cref{fig:fjordsOr}, works as expected: if either of the inputs is active, then Left may move to activate the output.  Otherwise, the output must be inactive.

\begin{figure}[h!]
    \begin{center}\includegraphics[scale = .6]{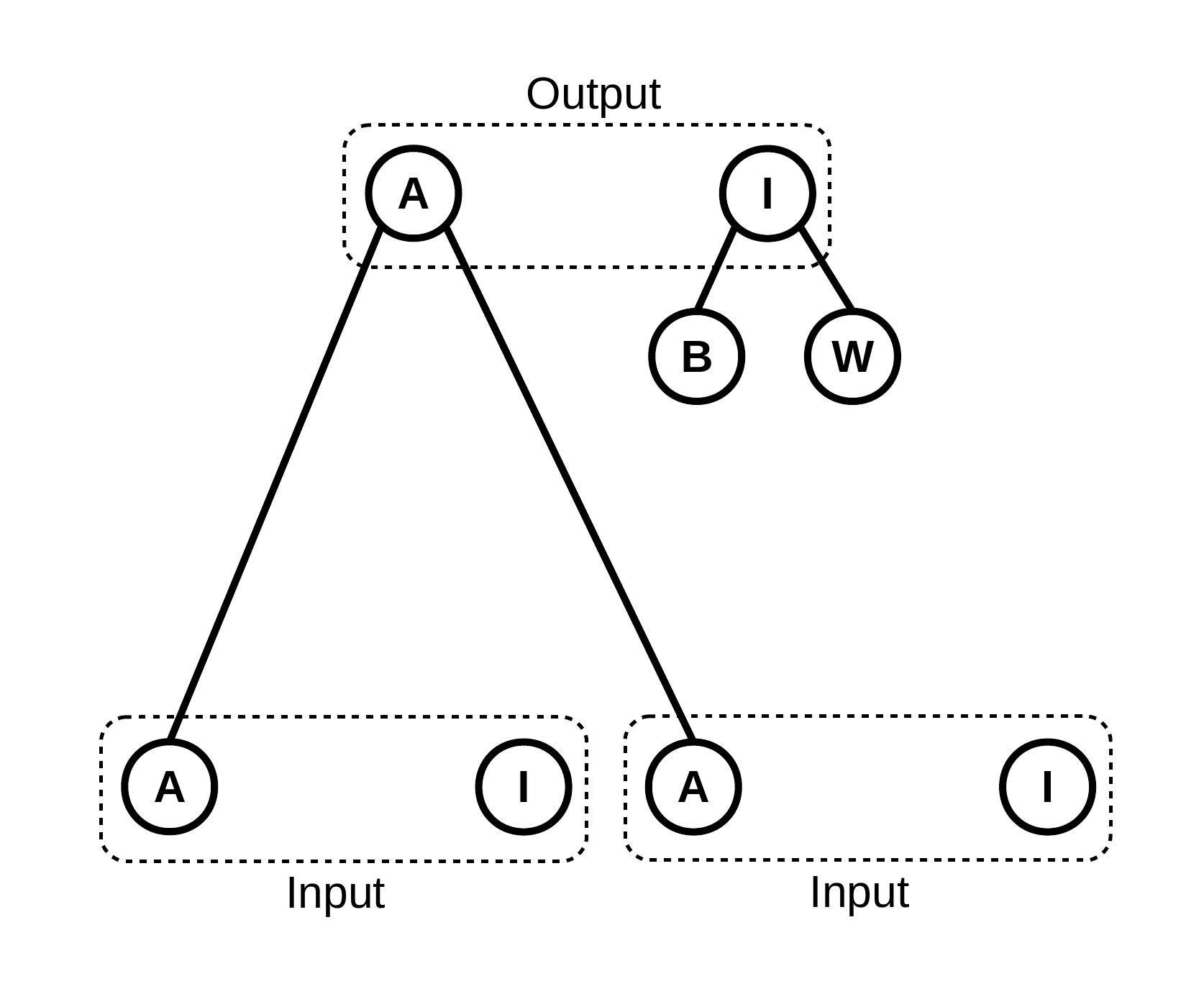}\end{center}
    \caption{The \ruleset{Graph-Fjords} or gadget.}
    \label{fig:fjordsOr}
\end{figure}

\subsubsection{\ruleset{Graph-Fjords} And Gadget}

The and gadget, shown in \cref{fig:fjordsAnd}, is the most surprising as it is not symmetric.  In this gadget, both inputs must be active in order for it to be worth Left's turn to color the active output vertex \black.  Clearly, if the left-hand input is inactive, Left won't have the chance to color that active input.  It remains to consider the situation where the left-hand input is active and the right-hand input is inactive.

\begin{figure}[h!]
    \begin{center}\includegraphics[scale = .6]{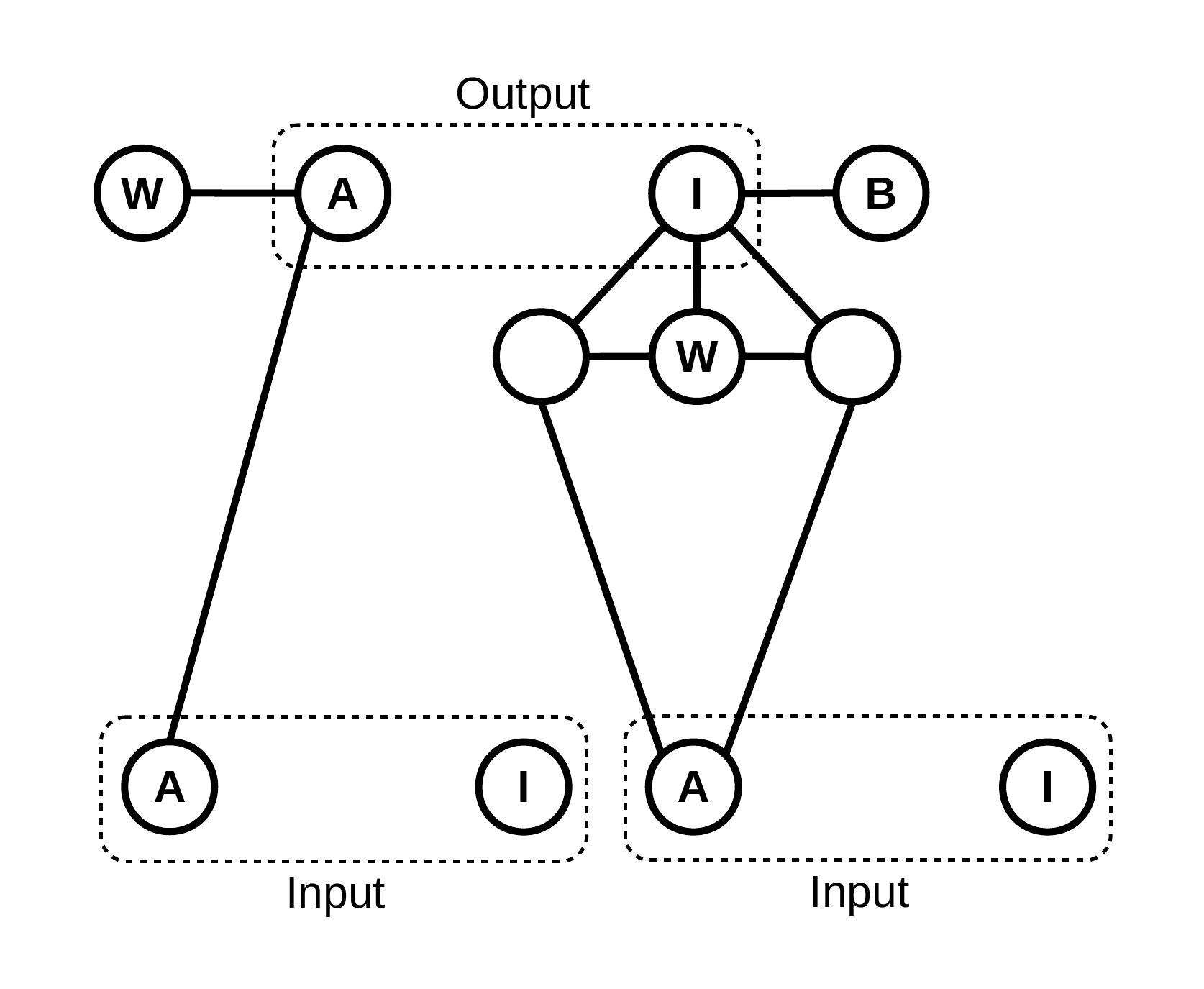}\end{center}
    \caption{The \ruleset{Graph-Fjords} and gadget.}
    \label{fig:fjordsAnd}
\end{figure}

In this case, if Left decides to activate the output, Right can play on the inactive vertex.  Now Right has cut off two extra vertices that can be taken later.  Even if Left plays the rest of the game perfectly, Right will win.  Right can never be cut off from these two spaces, but Left must be sure to occupy either the inactive output or the right-hand active input.

\subsubsection{Setting the Incentives}

In order to finish putting the reduction together, we need to prescribe incentives on each gadget so that it is played after its inputs are chosen.  Let $m$ be the number of non-variable gadgets.  Then:

\begin{itemize}
  \item  Those $m$ gadgets will have incentives of $0, 3, 6, 9, \ldots, 3m - 3$, starting with the goal and building back from that.  These are the sizes of the cliques connected to each output vertex.  There are no cycles in the flow of the reduction, so we can order the gadgets this way.
  \item  Since each gadget is worth at least three more ``points'' than any gadget off it's output(s), it is never worthwhile to play out of order: even for And gadgets where an extra two vertices could be gained by Right playing out of order on the inactive vertex.
  \item  The variables will all have incentive $3m$ so they are all worth playing at before any of the other gadgets.
\end{itemize}

\begin{theorem}[\ruleset{Graph-Fjords} is \cclass{PSPACE}-hard]
  \label{theorem:fjordsIsHard}
  Determining whether the next play has a winning strategy in \ruleset{Graph-Fjords} is \cclass{PSPACE}-hard.
\end{theorem}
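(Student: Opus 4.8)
The plan is to assemble the gadgets described in the preceding subsections into a complete reduction from \btcl\ to \ruleset{Graph-Fjords}, showing that Left has a winning strategy in the constructed \ruleset{Graph-Fjords} position if and only if Left can flip the goal edge in the source \btcl\ instance. Since \btcl\ is \cclass{PSPACE}-hard (even on planar graphs with the six vertex types) and \ruleset{Graph-Fjords} is a placement game and hence in \cclass{PSPACE}, this establishes \cclass{PSPACE}-hardness. First I would fix the source instance as coming from a reduction from \ruleset{POS-CNF}, so that the flow of the formula is acyclic and the variables are naturally played before the intermediate gadgets and goal. I would then state the global invariant that drives the correctness argument: at every point in the \ruleset{Graph-Fjords} game, both players have the same number of ``cached'' guaranteed moves available in their private cliques, so that the winner is determined purely by who claims the goal vertex last.

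The heart of the proof is a two-part equivalence. First I would argue that the incentive structure forces a canonical play order: since the variable gadgets each carry a clique of size $t = 3m$ and the non-variable gadgets carry strictly decreasing clique incentives $3m-3, \ldots, 6, 3, 0$ ordered from the goal backward, no player ever benefits from playing out of order. The key quantitative claim is that each gadget is worth at least three more points than any gadget downstream of its output, so even the asymmetric And gadget—where Right could steal two extra vertices by playing out of turn on the inactive vertex—never makes an out-of-order play profitable; I would verify that $2 < 3$ closes this gap. Second, conditioned on canonical play, I would show that each gadget faithfully simulates its \btcl\ counterpart: the variable gadget lets Left choose true/false via where \black\ is placed; the split gadget propagates an active input to both outputs; the choice gadget lets exactly one output be activated; and the Or and And gadgets compute the correct Boolean function on their inputs' active/inactive status. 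The And gadget argument is the one that needs the most care, precisely because it is not symmetric and relies on the parity-and-caching bookkeeping rather than on local move-blocking alone.

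I expect the main obstacle to be making the parity and caching argument fully rigorous. The construction hinges on the claim that after all $n$ variables are chosen, each player has cached exactly $\lceil \frac{n}{2} t \rceil$ private moves, with the dummy variable added when $n$ is odd to balance the count; I would need to track this invariant through every subsequent gadget play and confirm that correct play by both sides preserves the balance, so that the entire game reduces to a ``last move wins'' contest over the goal vertex. The delicate point is the And gadget: I must show that if Left tries to activate an And output whose inputs are not both active, Right's response on the inactive vertex breaks the parity in Right's favor by a net of two moves, which the incentive spacing of three guarantees Left cannot recover elsewhere. Conversely, when the formula is genuinely satisfied along a path to the goal, I would show Left activates the final input-output pair and claims the goal vertex as the last move. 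Tying these together, I would conclude that the next player (Left) wins the \ruleset{Graph-Fjords} position exactly when Left can flip the goal edge in \btcl, completing the reduction and proving \cref{theorem:fjordsIsHard}.
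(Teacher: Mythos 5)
Your proposal follows essentially the same approach as the paper: the paper's own proof of \cref{theorem:fjordsIsHard} is a one-line appeal to the gadgets and incentive scheme laid out in the preceding subsections (variable gadgets with cliques of size $3m$, input/output pairs, the decreasing incentives $3m-3, \ldots, 3, 0$ forcing play order, the asymmetric And analysis where the two stolen vertices are dominated by the incentive gap of three, and the goal vertex as the last move). Your write-up actually articulates the correctness argument—canonical play order, gadget-by-gadget simulation, and the parity/caching invariant—in more detail than the paper does, but it is the same reduction and the same key ideas.
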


\begin{proof}
  By using the scheme and gadgets described above, we can always reduce from \ruleset{B2CL} to \ruleset{Graph-Fjords}.
\end{proof}

\begin{corollary}[\ruleset{Graph-Fjords} is \cclass{PSPACE}-complete]
  Determining whether the next player has a winning strategy in \ruleset{Graph-Fjords} is \cclass{PSPACE}-complete.
\end{corollary}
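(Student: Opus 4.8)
The plan is to combine the two standard ingredients of a completeness result: membership in \cclass{PSPACE} and \cclass{PSPACE}-hardness. The hardness half is already in hand from \cref{theorem:fjordsIsHard}, so the corollary reduces to verifying membership and then invoking both facts together.

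First I would establish that \ruleset{Graph-Fjords} lies in \cclass{PSPACE}. This is immediate from the observation made when placement games were introduced: \ruleset{Graph-Fjords}, as defined in \cref{def:fjords}, is a placement game, and in any placement game a mark is never moved or removed, so every play strictly increases the number of colored vertices. Hence the length of any line of play---and thus the depth of the game tree---is bounded by the number of vertices $n$. A standard recursive minimax search over this tree stores only the current root-to-leaf path together with a constant amount of bookkeeping per level, giving a $\mathrm{poly}(n)$-space algorithm that decides whether the next player has a winning strategy.

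Then I would conclude by combining the two parts: since \ruleset{Graph-Fjords} is both in \cclass{PSPACE} and \cclass{PSPACE}-hard (the latter by \cref{theorem:fjordsIsHard}), it is \cclass{PSPACE}-complete by definition. I do not anticipate a genuine obstacle here; the substantive work was discharged in the hardness reduction, and the membership claim is exactly the general bound already stated for all placement games, specialized to this ruleset. The only point demanding any care is to note that \ruleset{Graph-Fjords} does satisfy the placement-game requirements, after which no new argument is needed. This mirrors the completeness corollary proved for \ruleset{Graph-NoGo} in the previous section.
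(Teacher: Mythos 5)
Your proposal is correct and matches the paper's own proof exactly: both cite \cclass{PSPACE} membership from the general fact that all placement games lie in \cclass{PSPACE} (a consequence of the polynomial bound on game length), and then combine this with \cref{theorem:fjordsIsHard} to conclude completeness. Your version merely spells out the minimax-search space bound that the paper leaves implicit.
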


\begin{proof}
  We can determine the winner of any placement game in \cclass{PSPACE}. Thus, by \cref{theorem:fjordsIsHard}, \ruleset{Graph-Fjords} is \cclass{PSPACE}-complete.
\end{proof}

\subsection{Planar Fjords}

\begin{corollary}[\ruleset{Planar-Fjords} is \cclass{PSPACE}-complete]
  Determining whether the next player has a winning strategy in \ruleset{Graph-Fjords} on planar graphs is \cclass{PSPACE}-complete.
\end{corollary}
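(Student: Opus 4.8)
The plan is to follow the template of the two preceding planar corollaries: \ruleset{Graph-Fjords} is a placement game and hence already in \cclass{PSPACE}, so it suffices to verify that the hardness reduction of \cref{theorem:fjordsIsHard} can be made to output a planar graph. The source problem, \btcl, is \cclass{PSPACE}-hard even on planar graphs, so the underlying ``flow graph'' that wires together the variable, goal, choice, split, or, and and gadgets is itself planar; I would fix a planar embedding of it at the outset. Each intermediate gadget touches only a constant number of input/output pairs, so, reading the gadget figures as planar drawings whose inputs and outputs are boundary ports, I would argue that every gadget can be inserted into the embedding while respecting the cyclic order of the edges around the corresponding \btcl\ vertex. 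The input/output pairs that replace edges then route along the edges of the planar flow graph and introduce no crossings, exactly as in the \ruleset{Col} and \ruleset{Graph-NoGo} planar corollaries.

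The one place where the construction is genuinely \emph{not} planar as written is the incentive structure: each variable and each output vertex carries a clique $K_t$ with $t$ as large as $3m$, and $K_t$ contains a $K_5$ for $t \ge 5$, so it cannot be drawn without crossings. The key step, therefore, is to replace every incentive clique by a planar gadget that delivers the same tempo. I claim a simple path $P_t$ on $t$ vertices, attached by a single edge to the same vertex $v$ from which the clique was hung, suffices: if $v$ is colored by player $A$, then $A$ can color the path one vertex at a time, marching from the attached end to the far end, for exactly $t$ consecutive legal moves, while the opponent $B$ can never enter because no path vertex is ever adjacent to a vertex of color $B$. Since $K_t$ also handed its owner exactly $t$ moves and was likewise inaccessible to the opponent, swapping $K_t$ for $P_t$ changes neither the move counts nor who controls them, so the ordering established in the ``Setting the Incentives'' discussion (each gadget played only after its inputs, and all variables played first) is preserved verbatim. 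A path is planar and, being attached at a single vertex, can be routed into any incident face without creating a crossing.

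The main obstacle I anticipate is verifying that this clique-to-path substitution leaves the and-gadget analysis intact, since that gadget is the asymmetric one in which playing out of order can gain Right two extra vertices. I would confirm that those two vertices belong to the gadget's own logic and not to the incentive clique, so that the separation of three ``points'' between consecutive incentives still strictly dominates the at-most-two swing after the substitution. Once that is checked, the argument closes: the modified reduction still runs in polynomial time, now outputs a planar graph, and preserves the winner, so by \cref{theorem:fjordsIsHard} (with cliques replaced by paths throughout) planar \ruleset{Graph-Fjords} is \cclass{PSPACE}-hard; combined with membership in \cclass{PSPACE}, it is \cclass{PSPACE}-complete on planar graphs.
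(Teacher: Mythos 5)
Your proposal follows the same template as the paper's proof---membership in \cclass{PSPACE} plus the claim that the \btcl\ reduction preserves planarity---but it does something the paper does not: it notices that the reduction, as written, does \emph{not} actually output a planar graph. The paper's own proof is a one-line assertion that ``the reduction from \btcl\ preserves planarity,'' yet the incentive structure attaches cliques $K_t$ with $t$ as large as $3m$ to the variable and output vertices, and any $K_t$ with $t \ge 5$ contains $K_5$ and so destroys planarity outright. Your clique-to-path substitution is a genuine repair, and it is sound: in \ruleset{Graph-Fjords} the only role of an incentive clique is to hand exactly $t$ extra moves to whichever player claims the attachment vertex while remaining forever inaccessible to the opponent (a player may only paint adjacent to their own color, and no clique vertex ever neighbors an opponent-colored vertex); a pendant path $P_t$ has precisely the same two properties, merely forcing the $t$ moves to be played in sequence rather than in arbitrary order, which is irrelevant since only the move count enters the incentive argument. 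Your check on the and-gadget is also the right thing to verify: the two-vertex swing Right can gain by playing out of order lives in the gadget's own input/output vertices, not in the incentive attachment, so the gap of three between consecutive incentive values still dominates after the substitution. In short, your proof is correct and strictly more careful than the paper's; the paper's planarity claim needs exactly the kind of patch you supply (or some equivalent planar tempo gadget) to be valid.
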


\begin{proof}
  Since the reduction from \btcl\ preserves planarity, \ruleset{Graph-Fjords} is also \cclass{PSPACE}-hard on planar graphs.  Since it is in \cclass{PSPACE}, it is also \cclass{PSPACE}-complete.
\end{proof}

\section{Conclusions}

The main results of this paper are the computational hardness results for the three placement games \ruleset{Col}, \ruleset{Graph-NoGo}, and \ruleset{Graph-Fjords}: all three are \cclass{PSPACE}-complete.  Moreover, \ruleset{Col} and \ruleset{Graph-NoGo} 
are \cclass{PSPACE}-complete on planar graphs.  Even similar placement games such as \ruleset{Snort} and \ruleset{Node-Kayles} are only known to be \cclass{PSPACE}-complete on non-planar graphs, though those hardness results were known in the 1970's \cite{DBLP:journals/jcss/Schaefer78}.

The hardness of \ruleset{Col} is especially satisfying, as the problem has been open for decades without known progress.  Since the reduction proving hardness for \ruleset{Graph-NoGo} starts from \ruleset{Col}, we expect that \ruleset{Col} will be useful as the source for other reductions.

\section{Future Work}

There is still many placement games with unknown computational complexity.  In particular, it is still unknown whether \ruleset{NoGo} (\ruleset{Graph-NoGo} on grid-graphs) and \ruleset{Fjords} (\ruleset{Graph-Fjords} on subgraphs of a hexagonal grid) are computationally difficult.  Either of these would be a large improvement over the current result.

\begin{openProblem}
  Is \ruleset{NoGo} computationally hard?
\end{openProblem}

\begin{openProblem}
  Is \ruleset{Fjords} computationally hard?
\end{openProblem}

There are many other placement games that can be considered. 

\bibliographystyle{plain}

\begin{thebibliography}{1}

\bibitem{LessonsInPlay:2007}
M.~H. Albert, R.~J. Nowakowski, and D.~Wolfe.
\newblock {\em Lessons in Play: An Introduction to Combinatorial Game Theory}.
\newblock A. K. Peters, Wellesley, Massachusetts, 2007.

\bibitem{WinningWays:2001}
Elwyn~R. Berlekamp, John~H. Conway, and Richard~K. Guy.
\newblock {\em Winning Ways for your Mathematical Plays}, volume~1.
\newblock A K Peters, Wellesley, Massachsetts, 2001.

\bibitem{BrownCHMMNS}
J.I. Brown, D~Cox, A.~Hoefel, N.~McKay, R.~Milley, R.J. Nowakowski, and A.A.
  Siegel.
\newblock Polynomial profiles of placement games.
\newblock {\em To appear in \emph{Games of No Chance 5}}.

\bibitem{AlgGameTheory_GONC3}
Erik~D. Demaine and Robert~A. Hearn.
\newblock Playing games with algorithms: Algorithmic combinatorial game theory.
\newblock In Michael~H. Albert and Richard~J. Nowakowski, editors, {\em Games
  of No Chance 3}, volume~56 of {\em Mathematical Sciences Research Institute
  Publications}, pages 3--56. Cambridge University Press, 2009.

\bibitem{DBLP:books/daglib/0023750}
Robert~A. Hearn and Erik~D. Demaine.
\newblock {\em Games, puzzles and computation}.
\newblock A {K} Peters, 2009.

\bibitem{DBLP:journals/jcss/Schaefer78}
Thomas~J. Schaefer.
\newblock On the complexity of some two-person perfect-information games.
\newblock {\em Journal of Computer and System Sciences}, 16(2):185--225, 1978.

\end{thebibliography}

\end{document}